\documentclass[english,11pt]{article}
\usepackage{fullpage}
\usepackage{amsmath}
\usepackage{amssymb}
\usepackage{babel}
\usepackage{graphics}
\usepackage{graphicx}
\usepackage{subfigure}
\usepackage{scalefnt}
\usepackage{rotating}

\usepackage{tikz}
\usetikzlibrary{arrows,automata,shapes,decorations,calc,matrix}
\newtheorem{theorem}{Theorem}

\newtheorem{lemma}[theorem]{Lemma}
\newtheorem{conjecture}[theorem]{Conjecture}

\newenvironment{proof}{\noindent {\bf Proof }}{\qed}

\newcommand{\qed}{\penalty 1000 \hfill\penalty 1000$\Box$\par\medskip}

\newcommand{\start}{{\rm start}}
\newcommand{\val}{{\rm val}}
\newcommand{\modS}{\!\!\mod}
\newcommand{\maxState}{x}

\makeatletter

\begin{document}
\title{
Strategy complexity of finite-horizon Markov decision processes and simple stochastic games\thanks{
Work of the second author supported by the Sino-Danish Center for the Theory of Interactive Computation,
funded by the Danish National Research Foundation and the National
Science Foundation of China (under the grant 61061130540). The second author acknowledge support from the Center for research in
the Foundations of Electronic Markets (CFEM), supported by the Danish
Strategic Research Council.
 The first author was supported by 
FWF Grant No P 23499-N23,  FWF NFN Grant No S11407-N23 (RiSE), ERC Start grant (279307: Graph Games), and Microsoft faculty fellows award.}}
\author{
Krishnendu Chatterjee\thanks{IST Austria. Email: {\tt krish.chat@ist.ac.at}.}
\and Rasmus Ibsen-Jensen\thanks{Department of Computer Science,
Aarhus University, Denmark. E-mail:
{\tt rij@cs.au.dk}.}
}
\date{}

\maketitle

\begin{abstract}
Markov decision processes (MDPs) and simple stochastic games (SSGs) provide a rich 
mathematical framework to study many important problems related to 
probabilistic systems. 
MDPs and SSGs with finite-horizon objectives, where the goal
is to maximize the probability to reach a target state in a given finite time, 
is a classical and well-studied problem.
In this work we consider the strategy complexity of finite-horizon MDPs and SSGs.
We show that for all $\epsilon>0$, the natural class of counter-based strategies 
require at most $\log \log (\frac{1}{\epsilon}) + n+1$ memory states, 
and memory of size $\Omega(\log \log (\frac{1}{\epsilon}) + n)$ is required,
for $\epsilon$-optimality, where $n$ is the number of states of the MDP (resp. SSG).
Thus our bounds are asymptotically optimal.
We then study the periodic property of optimal strategies, and show a 
sub-exponential lower bound on the period for optimal strategies.
\end{abstract}

\section{Introduction}

\smallskip\noindent{\bf Markov decision process and simple stochastic games.}
The class of \emph{Markov decision processes (MDPs)} is a classical model for probabilistic 
systems that exhibit both stochastic and and deterministic 
behavior~\cite{Howard}.
MDPs have been widely used to model and solve control problems for 
stochastic systems~\cite{FV97}: there, non-determinism represents the freedom 
of the controller to choose a control action, while the probabilistic 
component of the behavior describes the system response to control actions. 
\emph{Simple stochastic games (SSGs)} enrich MDPs by allowing two types of 
non-determinism (angelic and demonic non-determinism) along with stochastic 
behavior~\cite{Condon92}. 
MDPs and SSGs provide a rich mathematical framework to 
study many important problems related to probabilistic systems.

\smallskip\noindent{\bf Finite-horizon objective.} 
One classical problem widely studied for MDPs and SSGs is 
the \emph{finite-horizon objective}.
In a finite-horizon objective, a finite time horizon $T$ is given and the goal
of the player is to maximize the payoff within the time horizon $T$ in MDPs
(in SSGs against all strategies of the opponent). 
The complexity of MDPs and SSGs with finite-horizon objectives have been 
well studied, with book chapters dedicated to them~\cite{FV97,Puterman}.
The complexity results basically show that iterating the Bellman equation for
$T$ steps yield the desired result~\cite{FV97,Puterman}.
While the computational complexity have been well-studied, perhaps surprisingly
the strategy complexity has not received great attention.
In this work we consider several problems related to the strategy complexity 
of MDPs and SSGs with finite-horizon objectives, where the 
objective is to reach a target state within a finite time horizon $T$.

\smallskip\noindent{\bf Our contribution.} In this work we consider the 
memory requirement for $\epsilon$-optimal strategies, for $\epsilon>0$, 
and a periodic property of optimal strategies in finite-horizon MDPs and 
SSGs. 
A strategy is an $\epsilon$-optimal strategy, for $\epsilon>0$, if the 
strategy ensures within $\epsilon$ of the optimal value against all 
strategies of the opponent. 
For finite-horizon objectives, the natural class of strategies are counter-based
strategies, which has a counter to count the number of time steps.
Our first contribution is to establish asymptotically optimal memory bounds 
for $\epsilon$-optimal counter-based strategies, for $\epsilon>0$, in finite-horizon
MDPs and SSGs.
We show that $\epsilon$-optimal counter-based strategies require at 
most memory of size $\log \log (\frac{1}{\epsilon}) + n+1$ 
and memory of size $\Omega(\log \log (\frac{1}{\epsilon}) + n)$ is required,
where $n$ is the size of the state space.
Thus our bounds are asymptotically optimal. 
The upper bound holds for SSGs and the lower bound is for 
MDPs.
We then consider the periodic (or regularity) property of optimal strategies. 
The period of a strategy is the number $P$ such that the strategy repeats 
within every $P$ steps (i.e., it is periodic with time step $P$). 
We show a sub-exponential lower bound on the period of optimal strategies
for MDPs with finite-horizon objectives, by presenting a family of MDPs
with $n$ states where all optimal strategies are periodic and the period
is $2^{\Omega(\sqrt{n\cdot \log(n)})}$.

\smallskip\noindent{\bf Organization of the paper.}
The paper is organized as follows:
In Section~\ref{sec:def} we present all the relevant definitions related 
to stochastic games and strategies. 
In Section~\ref{sec:counter} we show that $\Theta(n+\log \log \epsilon^{-1})$ number of bits are 
necessary and sufficient for $\epsilon$-optimal counter-based strategies, for all $\epsilon>0$,
in both finite-horizon MDPs and SSGs.  
In Section~\ref{sec:period} we show that there are finite-horizon MDPs where all 
optimal strategies are periodic and have a period of $2^{\Omega(\sqrt{n\log n})}$.

\section{Definitions}\label{sec:def}
The class of {\em infinite-horizon simple stochastic games} (SSGs) consists of 
two player, zero-sum, turn-based games, played on a (multi-)graph. 
The class was first defined by Condon~\cite{Condon92}. 
Below we define SSGs, the finite-horizon version, and the important sub-class of MDPs.

\smallskip\noindent{\em SSGs, finite-horizon SSGs, and MDPs.}
An SSG $G=(S_1,S_2,S_R,\perp,(A_s)_{s\in S_1\cup S_2\cup S_R},s_0)$ consists of a terminal state $\perp$ and three sets of disjoint non-terminal states, $S_1$ (max state), $S_2$ (min states), $S_R$ (coin toss states). We will use $S$ to denote the union, i.e., $S=S_1\cup S_2 \cup S_R$. For each state $s\in S$, let $A_s$ be a (multi-)set of {\em outgoing arcs of $s$}. We will use $A=\bigcup_s A_s$ to denote the (multi-)set of all arcs. Each state $s\in S$ has two outgoing arcs. If $a$ is a arc, then $d(a)\in S\cup \{\perp\}$ is the {\em destination} of $a$. There is also a designated start state $s_0\in S$.   
The class of {\em finite-horizon simple stochastic games} (FSSGs) also consists
of two player, zero-sum, turn-based games, played on a (multi-)graph. 
An FSSG $(G,T)$ consists of an SSG $G$ and a finite time limit (or horizon) $T\geq 0$. 
Let $G$ be an SSG and $T\geq 0$, then we will write the FSSG $(G,T)$ as $G^T$.
Given an SSG $G$ (resp. FSSG $G^T$), for a state $s$, we denote by $G_s$ (resp. $G^T_s$) 
the same game as $G$ (resp. $G^T$), except that $s$ is the start state.
The class of {\em infinite (resp. finite) horizon Markov decision processes} (MDPs and FMDPs respectively) 
is the subclass of SSGs (resp. FSSGs) where $S_2=\emptyset$.

\smallskip\noindent{\em Plays and objectives of the players.}
An SSG $G$ is {\em played} as follows. A pebble is moved on to $s_0$. For $i\in \{1,2\}$, whenever the pebble is moved on to a state $s$ in $S_i$, then Player~$i$ chooses some arc $a\in A_s$ and moves the pebble to $d(a)$. Whenever the pebble is moved on to a state $s$ in $S_R$, then an $a\in A_s$ is chosen uniformly at random and the pebble moves to $d(a)$. If the pebble is moved on to $\perp$, then the game is over. 
For all $T\geq 0$ the FSSG $G^T$ is played like $G$, except that the pebble can be moved at most $T+1$ times.
The {\em objective} of both SSGs and FSSGs is for Player~1 to maximize the probability that the pebble is moved on to $\perp$ 
(eventually in SSGs and with in $T+1$ time steps in FSSGs). 
The objective of Player~2 is to minimize this probability.

\smallskip\noindent{\em Strategies.}
Let $S^*$ be the set of finite sequences of states. 
For all $T$, let $S^{\leq T}\subset S^*$ be the set of sequences of states, which have length at most $T$.
A {\em  strategy $\sigma_i$ for Player~$i$} in an SSG is a map from $S^*\times S_i$ into $A$, 
such that for all $w\in S^*$ and $s\in S$ we have $\sigma_i(w \cdot s)\in A_s$. 
Similarly, a {\em strategy $\sigma_i$ for Player~$i$} in an FSSG $G^T$ is a map from $S^{\leq T}\times S_i$ into $A$, 
such that for all $w\in S^{\leq T}$ and $s\in S$ we have $\sigma_i(w \cdot s)\in A_s$. 
In all cases we denote by $\Pi_i$ the set of all strategies for Player~$i$. If $S_i=\emptyset$, we will let $\emptyset$ denote the corresponding 
strategy set. 
Below we define some special classes of strategies. 

\smallskip\noindent{\em Memory-based, counter-based and Markov strategies.}
Let $M= \{0,1\}^*$ be the set of possible {\em memories}. 
A {\em memory-based strategy  $\sigma_i$ for Player~$i$} consists of a pair 
$(\sigma_u,\sigma_a)$, where 
\begin{itemize} 
\item{} $\sigma_u$, the memory-update function, is a map from $M\times S$ into $M$
\item{} $\sigma_a$, the next-action function, is a map from $M\times S_i$ into $A$, such that for 
all $m\in M$ and $s\in S_i$ we have $\sigma_a(m,s)\in A_s$.
\end{itemize}
A {\em counter-based strategy} is a special case of memory-based strategies, where for all $m\in M$ 
and $s,s'\in S$ we have $\sigma_u(m,s)=\sigma_u(m,s')$. That is the memory can only contain a counter of some type. 
We will therefore write $\sigma_u(m,s)$ as $\sigma_u(m)$ for all $m,s$ and any counter-based strategy $\sigma$.
A {\em  Markov strategy $\sigma_i$ for Player~$i$}  is a special case of strategies where 
\[\forall p,p'\in S^{\leq T}: |p|=|p'| \wedge p_{|p|}=p'_{|p'|}\in S_i \Rightarrow \sigma(p',p'_{|p'|})=\sigma(p,p_{|p|}).\] 
That is, a Markov strategy only depends on the length of the history and the current state. 
Let $\Pi'_i$ be the set of all Markov strategies for Player~$i$.

\smallskip\noindent{\em Following a strategy.}
For a  strategy, $\sigma_i$, for Player~$i$ we will say that Player~$i$ {\em follows} $\sigma_i$ if for all $n$ given the sequence of states $(p_i)_{i\leq n}$ the pebble has been on until move $n$ and that $p_n\in S_i$, then Player~$i$ chooses $\sigma((p_i)_{i\leq n},p_n)$. 
For a memory-based  strategy for Player~$i$ $\sigma_i$, we will say that Player~$i$ {\em follows} $\sigma_i$ if for all $n$ given the sequence of states $(p_i)_{i\leq n}$ the pebble has been on until move $n$, that $p_n\in S_i$ and that $m^i=\sigma_u(m^{i-1},p_i)$ and that $m^0=\emptyset$, then Player~$i$ chooses $\sigma_a(m^n,p_n)$. 

\smallskip\noindent{\em Space required by a memory-based strategy.}
The {\em space usages of a memory-based strategy} is the logarithm of the number of distinct states generated by the strategy at any point, if the player follows that strategy. A memory-based strategy is {\em memoryless} if there is only one memory used by the strategy. For any FSSG $G^T$ with $n$ states it is clear that the set of  strategies is a subset of memory-based strategies that uses memory at most $T\log n$, since for any  strategy $\sigma$ we can construct a memory-based strategy $\sigma'$ by using the memory for the sequence of states and then choose the same action as $\sigma$ would with that sequence of states. Hence we will also talk about $\epsilon$-optimal memory-based strategies. 
Also note that for any FSSG $G^T$ it is clear that the set of Markov strategies is a subset of the set of counter-based strategies that uses space at most $\log T$.

\smallskip\noindent{\em Period of a counter-based strategy.}
We will distinguish between two kinds of memories for a counter-based strategy $\sigma$. One kind is only used once (the initial phase) and the other kind is used arbitrarily many times (the periodic phase). Let $m^0=\emptyset$ and $m^i=\sigma_u(m^{i-1})$. Then if $m^i=m^j$ for some $i<j$, we also have that $m^{i+c}=m^{j+c}$ and $m^{i}=m^{i+c(j-i)}$. Hence if a memory is used twice, it will be reused again. We will let the number of memories that are only used once be $N$ and the number of memories used more than once be $p$, which we will call the period. The number $N$ is mainly important for $\epsilon$-optimal strategies and period is mainly important for optimal strategies.

\smallskip\noindent{\em Probability measure and values.}
A pair of strategies $(\sigma_1,\sigma_2)$, one for each player (in either an SSG or an FSSG), 
defines a probability that the pebble is eventually moved to $\perp$. Let the probability be denoted as 
$P^{\sigma_1,\sigma_2}$.
For all SSGs $G$ (resp. FSSGs $G^T$) it follows from the results of Everett~\cite{Everett} 
that 
\[\sup_{\sigma_1 \in \Pi'_1}\inf _{\sigma_2 \in \Pi_2} P^{\sigma_1,\sigma_2}=\inf _{\sigma_2 \in \Pi'_2}\sup_{\sigma_1 \in \Pi_1} P^{\sigma_1,\sigma_2}.\] 
We will call this common value as the \emph{value} of $G$ (resp. $G^T$) and denote it $\val(G)$ (resp. $\val(G^T)$). 

\smallskip\noindent{\em $\epsilon$-optimal and optimal strategies.}
For all $\epsilon\geq 0$, we will say that a strategy $\sigma_1$ is {\em $\epsilon$-optimal for Player~1} if 
\[\inf _{\sigma_2 \in \Pi_2} P^{\sigma_1,\sigma_2}+\epsilon\geq \sup_{\sigma'_1 \in \Pi'_1} \inf _{\sigma_2 \in \Pi_2} P^{\sigma'_1,\sigma_2}.\] 
Similarly, a strategy $\sigma_2$ is {\em $\epsilon$-optimal for Player~2} if 
\[\sup_{\sigma_1 \in \Pi_1} P^{\sigma_1,\sigma_2}-\epsilon\leq \inf _{\sigma'_2 \in \Pi'_2}\sup_{\sigma_1 \in \Pi_1} P^{\sigma_1,\sigma'_2}.\] 
A strategy $\sigma$ is {\em optimal for Player~$i$} if it is $0$-optimal.  
Condon~\cite{Condon92} showed that {\em there exist optimal memoryless strategies} for any SSG $G$ that are also optimal for $G_s$ for all $s\in S$. 
This also implies that there are optimal Markov strategies for FSSGs that are also optimal for $G_s$ for all $s\in S$. 

\section{Bounds on $\epsilon$-optimal counter-based strategies}\label{sec:counter}

We will first show an upper bound on size of the memory used by a counter-based strategy for playing $\epsilon$-optimal in time limited games. The upper bound on memory size is by application of a result from Ibsen-Jensen and Miltersen~\cite{IJM11}.
The idea of the proof is that if we play an optimal strategy of $G$ in $G^T$ for sufficiently high $T$, then the value we get approaches the value of $G$.

\begin{theorem}\label{thm:upper}{\em (Upper bound)} For all FSSGs $G^T$ with $n$ states and $\epsilon>0$, there is an $\epsilon$-optimal counter-based 
strategy for both players such that memory size is at most $\log \log \epsilon^{-1}+n+1$
\end{theorem}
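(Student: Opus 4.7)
The plan is to reduce the finite-horizon problem to the infinite-horizon game $G$. By Condon's theorem there is a single memoryless strategy $\sigma^*_1$ for Player~1 that is optimal in $G_s$ for every state $s$, and symmetrically a memoryless $\sigma^*_2$ for Player~2. I would then invoke the quantitative convergence result of Ibsen-Jensen and Miltersen~\cite{IJM11} to extract a horizon threshold $T_0 \leq 2^{n+1}\log \epsilon^{-1}$ with the property that, for every starting state $s$ and every $T \ge T_0$, the strategy $\sigma^*_1$ played in $G^T_s$ guarantees Player~1 probability at least $\val(G_s)-\epsilon$ of reaching $\perp$ within the horizon (and symmetrically for $\sigma^*_2$). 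Since $\val(G^T_s)\le \val(G_s)$, this already makes $\sigma^*_1$ an $\epsilon$-optimal strategy in $G^T_s$ as soon as $T\ge T_0$.

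With this threshold in hand I would define the counter-based strategy for Player~1 in $G^T$ by cases. If $T \ge T_0$, I take $\sigma^*_1$ itself; being memoryless it is trivially counter-based and uses a single memory state. If $T < T_0$, I take the optimal Markov strategy of $G^T$ guaranteed by Condon; as a Markov strategy it is counter-based and uses at most $T < T_0$ memory states, one per time step. In either case the total number of memory states is at most $T_0\le 2^{n+1}\log \epsilon^{-1}$, so its binary logarithm is at most $(n+1)+\log\log \epsilon^{-1}$, matching the claimed bound. The construction for Player~2 is identical with $\sigma^*_2$ replacing $\sigma^*_1$ (and the optimal Markov strategy of $G^T$ for Player~2 in the short-horizon case).

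The main obstacle I expect is extracting from~\cite{IJM11} the convergence bound with the right constants so that $T_0$ indeed fits inside $2^{n+1}\log \epsilon^{-1}$. The underlying intuition---that value iteration in an SSG converges exponentially fast with rate controlled by the patience of optimal memoryless strategies, which is only $2^{O(n)}$---is standard, but the specific constants hidden in the $O(\cdot)$ are what determine whether the memory bound comes out to $n+1+\log\log \epsilon^{-1}$ or something slightly larger, and absorbing this slack into the additive $+1$ is the one non-cosmetic detail of the proof. Once the correct quantitative bound is in place, the two-case construction above is a direct reduction and the memory count is immediate.
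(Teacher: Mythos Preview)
Your proposal is correct and follows essentially the same argument as the paper: both use the Ibsen-Jensen--Miltersen bound to get a threshold $T_0 = 2^{n+1}\log\epsilon^{-1}$, play Condon's memoryless optimal strategy of $G$ when $T\ge T_0$, and fall back to the optimal Markov strategy (with $\log T < \log T_0$ memory) when $T<T_0$. Your write-up is in fact slightly more explicit than the paper's about the case split and about why $\sigma^*_1$ is $\epsilon$-optimal in $G^T$ (via $\val(G^T_s)\le\val(G_s)$), but the underlying proof is identical.
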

\begin{proof}
Since there is an optimal Markov strategy, there is a counter-based strategy, which uses memory at most $\log T$.
As shown by Ibsen-Jensen and Miltersen~\cite{IJM11} 
for any game $G^T$, if the horizon is greater than $2 \log \epsilon^{-1}2^n$, the value of $G^T$ approximates the value of $G$
with in $\epsilon$. It is clear that the value of all states are the same in an infinite-horizon game if either player is forced to play an optimal strategy. Hence, if $T\geq 2 \log \epsilon^{-1}2^n$ and  either player plays an optimal strategy of $G$ in $G^T$, then the value of all states are within $\epsilon$ of the value of the game. But there are optimal memoryless strategies in $G$ as shown by Condon~\cite{Condon92}. Therefore we have that in the worst case $T< 2 \log \epsilon^{-1}2^n$. Since $\log T$ is an upper bound, $\log \log \epsilon^{-1}+n+1$ is also an upper bound and hence the result. 
\end{proof}

We will now lower bound the size of the memory needed for a counter-based strategy to be $\epsilon$-optimal. 
Our lower bound will be divided into two parts. The first part will show that $\log \log \epsilon^-1$ is a lower bound on the memory required 
even for some MDPs with constantly many states.
The second part will show that even for fixed $\epsilon$, an $\epsilon$-optimal counter-based strategy will need to use a memory of size $O(n)$. 
Both lower bounds will show explicit MDPs with the required properties. See Figure~\ref{fig:epsilon} and Figure~\ref{fig:h4} respectively.

\begin{figure}
\centering
\begin{tikzpicture}[scale=0.3,->,>=stealth',shorten >=1pt,auto,node distance=4.2cm*0.5,
                    semithick]
      
\tikzstyle{every state}=[fill=white,draw=black,text=black]
\tikzstyle{max}=[state,regular polygon,regular polygon sides=3]
\tikzstyle{min}=[max,regular polygon rotate=180]

    \node[state,accepting] (goal) {$\perp$};
    \node[state] (trap) [below of =goal]{$\top$};
    \node[state] (m1) [right of=goal] {1};
    \node[state] (m2) [right of=m1] {2};
    \node[max] (max) [below right of=m2] {$\maxState$};
        \node[state] (start) [right of=max] {$\start$};
    \node[state] (h) [below of=m1] {$h$};
    
\path 
(m1) edge [bend left] (goal)
(m1) edge [bend right] (goal)
(m2) edge [bend left] (m1)
(m2) edge [bend right] (m1)
(max) edge (m2)
(max) edge (h)
(h) edge (goal)
(h) edge (trap)
(trap) edge [loop above] (trap)
(trap) edge [loop below] (trap)
(start) edge [loop above] (start)
(start) edge  (max)
;

\end{tikzpicture}

\caption{An MDP $G$, such that for all $\epsilon>0$ there is a $T$, such that all $\epsilon$-optimal memory-based strategies for $G^T$ require memory size of at least $\Omega(\log \log \epsilon^{-1})$. Circle vertices are the coin toss states. The triangle vertex is the max state. The vertex $\perp$ is the terminal state. }
\label{fig:epsilon}
\end{figure}
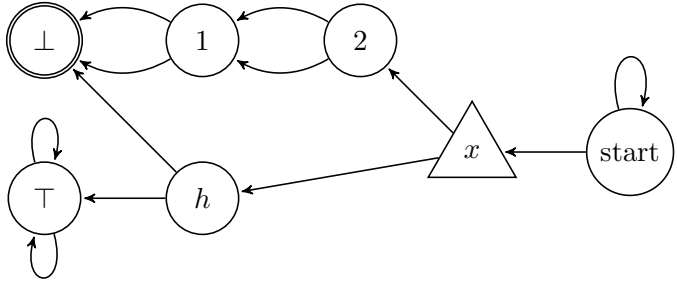

\smallskip\noindent{\em MDP for the lower bound of $\log\log \epsilon^{-1}$.}
Our first lower bound shows that in the MDP $M$ (Figure~\ref{fig:epsilon}) 
all $\epsilon$-optimal memory-based strategies require at least $\log \epsilon^{-1}$ distinct memory states,
i.e., the size of memory is at least $\log\log \epsilon^{-1}$. 
The MDP $M$ is defined as follows. There is one state $\maxState$ in $S_1$, the rest are in $S_R$. 
\begin{itemize}
\item{} The state $\top\in S_R$ has $A_\top=\{(\top,\top),(\top,\top)\}$.
\item{} The state $h\in S_R$ has $A_h=\{(h,\top),(h,\bot)\}$.
\item{} The state $1\in S_R$ has $A_1=\{(1,\perp),(1,\perp)\}$.
\item{} The state $2\in S_R$ has $A_2=\{(2,1),(2,1)\}$.
\item{} The state $\maxState\in S_1$ has $A_m=\{(\maxState,2),(\maxState,h)\}$.
\item{} The state start$\in S_R$ has $A_{\text{start}}=\{(\text{start},\text{start}),(\text{start},\maxState)\}$.
\end{itemize}

\begin{lemma}\label{lem:lowere}
All $\epsilon$-optimal memory-based strategies in $M^{T}$, for $T=\log \epsilon^{-1} - 1$, require at least $\log \epsilon^{-1}-2$ 
distinct states of memory, i.e., the size of memory is at least $\log\log \epsilon^{-1}$. 
\end{lemma}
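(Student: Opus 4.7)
My plan is to first determine the unique optimal action at $\maxState$ as a function of how many moves have already been used to reach it, then show that any single deviation from that optimum costs more than $\epsilon$, and finally translate the resulting forced action schedule into a lower bound on the number of memory states by exploiting the eventual periodicity of the memory sequence generated while the pebble loops at $\start$.

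I would begin with a short backward value calculation. Writing $v(s,t)$ for the probability of reaching $\perp$ from $s$ with $t$ moves remaining, one obtains $v(2,t)=1$ iff $t\ge 2$, $v(h,t)=1/2$ iff $t\ge 1$, and hence at $\maxState$ the arc $(\maxState,2)$ is the unique best action when at least $3$ moves remain, while $(\maxState,h)$ is the unique best action when exactly $2$ moves remain. The first-passage probability from $\start$ to $\maxState$ at move $k$ equals $2^{-k}$, so, writing $\epsilon=2^{-(T+1)}$, picking the wrong arc at any single $k\in\{1,\ldots,T-1\}$ contributes a loss of at least $2^{-k}\cdot\tfrac{1}{2}\ge 2^{-T}=2\epsilon>\epsilon$ to the expected value. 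Every $\epsilon$-optimal strategy must therefore, when first reaching $\maxState$ at move $k$, play $(\maxState,2)$ for every $k\in\{1,\ldots,T-2\}$ and play $(\maxState,h)$ for $k=T-1$.

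I would then exploit that every state visited strictly before the first arrival at $\maxState$ is $\start$, so the memory evolves deterministically: define $n_0=\emptyset$ and $n_{k+1}=\sigma_u(n_k,\start)$. The action taken at $\maxState$ upon first arrival at move $k$ is then $g(n_k):=\sigma_a(\sigma_u(n_k,\maxState),\maxState)$, a function only of $n_k$, and the schedule above forces $g(n_k)=(\maxState,2)$ for every $k\in\{1,\ldots,T-2\}$ while $g(n_{T-1})=(\maxState,h)$.

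The final step is a periodicity count on $(n_k)_{k\ge 0}$. If this orbit visits more than $T-2$ distinct memories we are already done, so assume it visits at most $T-2$. Then it is eventually periodic with some period $p\ge 1$, and the number $J$ of distinct values in the orbit satisfies $p\le J\le T-2$. Periodicity gives $n_{T-1}=n_{T-1-p}$ (valid because $T-1>J$ and $T-1-p\ge 1$), with $T-1-p\in\{1,\ldots,T-2\}$, so $g(n_{T-1-p})=(\maxState,2)\neq(\maxState,h)=g(n_{T-1})$, a contradiction. Therefore at least $T-1=\log\epsilon^{-1}-2$ distinct memory states are required. I expect the main technical obstacle to be precisely this last step: one must verify that the periodicity relation applies at the required indices and that $T-1-p$ necessarily lies in the forbidden range $\{1,\ldots,T-2\}$ rather than wrapping around to index $0$ or outside the range. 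The value-loss estimates are, by contrast, routine once the recursion for $v$ has been unrolled.
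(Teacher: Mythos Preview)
Your proposal is correct and follows essentially the same line as the paper's proof: compute the optimal action at $\maxState$ as a function of remaining moves, show that any deviation at a single first-arrival time already costs more than $\epsilon$, and then use the deterministic evolution of the memory along the all-$\start$ prefix together with eventual periodicity to force $T-1$ distinct memories. If anything, your treatment of the periodicity step (checking that $T-1-p$ indeed lands in $\{1,\dots,T-2\}$ and that the periodic relation applies there) is more explicit than the paper's, which simply asserts the conclusion.
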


\begin{proof}
We will first show the proof for counter-based strategies. At the end we will then extend it to memory-based strategies.

It is clear that $\val(M^2_{\maxState})=\frac{1}{2}$ and  for all $T>2$ we have $\val(M^T_{\maxState})=1$. If Player~$1$ chooses $(\maxState,h)$ in $M^2_{\maxState}$, then he gains $\frac{1}{2}$, otherwise, if he chooses $(\maxState,2)$, then he gains $0$. Also for all $T>2$, if Player~1 chooses $(\maxState,2)$ in $M^T_{\maxState}$, then he gains $1$, otherwise, if he chooses $(\maxState,h)$, then he gains $\frac{1}{2}$.

In $M_{\start}$ we end up at $\maxState$ after precisely $k\geq 2$ moves of the pebble with probability $2^{-k+1}$. Therefore, by the preceding any optimal memory-based strategy $\sigma$ must be able to find out if $T$ minus the length of the history is greater than $2$ from the memory. 

Let $\epsilon>0$ be given. For simplicity we will assume that $\epsilon=2^{-k}$ for some $k>0$. Let $c=\log \epsilon^{-1}$. Assume now that there is a  counter-based strategy $\sigma=(\sigma_u,\sigma_a)$ that uses  $c-3$ states of memory in $M^{c-1}_{\start}$. The pebble ends up at $m$ after $c-3$ moves with probability $2^{-(c-3)+1}=4\epsilon$. Let the sequences of memories until then be $m^0,m^1,\dots,m^{c-3}$. Since $\sigma$ was $\epsilon$-optimal we must have that $\sigma(m^{c-3},\maxState)=(\maxState,h)$. On the other hand for all $i<c-3$  we must also have that $ \sigma(m^i,\maxState)=(\maxState,2)$. Therefore $m^{c-3}$ differs from $m^i$ for $i< c-3$. Now assume that $m^i=m^j$ for $i<j$ and $i,j< c-3$. But then $\sigma_u(m^i)=\sigma_u(m^j)$ and hence $m^{i+1}=m^{j+1}$ and then by repeating this argument we have that $m^{k}=m^{c-3}$ for $k<c-3$. Therefore $m^i$ differs from $m^j$ for $i\neq j$ and $i,j\leq c-3$ and hence we need at least $c-2$ different memory states.

For general memory-based strategies the proof remains the same. This is because we can note that if the pebble ends up at $\maxState$ after $c-3$ moves, we have that $m^0=\emptyset$ and $m^i=\sigma_u(m^{i-1},\start)$ for $1\leq i\leq c-3$ and hence they must all differ by the same argument as before.
\end{proof}

For our second lower bound we will use an infinite family of MDPs \[H=\{H(1),H(2),\dots,H(i),\dots\},\] such that $H(i)$ contains $2i+4$ states, one of which is a max state, and all $\epsilon$-optimal counter-based strategies require space at least $i-4$, for some fixed $\epsilon$.

\smallskip\noindent{\em Family of MDPs for the lower bound of $n$.}
The MDP $H(i)$ is defined as follows. There is one state $\maxState$ in $S_1$, the rest are in $S_R$. \begin{itemize}
\item{} The state $\top\in S_R$ has $A_\top=\{(\top,\top),(\top,\top)\}$.

\item{} The state $h\in S_R$ has $A_h=\{(h,\top),(h,\bot)\}$.
\item{} The state $1\in S_R$ has $A_1=\{(1,\perp),(1,i)\}$.
\item{} For $j\in \{2,\dots,i\}$, the state $j\in S_R$ has $A_j=\{(j,i),(j,j-1)\}$.
\item{} The state $\maxState\in S_1$ has $A_m=\{(\maxState,i),(\maxState,h)\}$.
\item{} The state $1^*\in S_R$ has $A_{1^*}=\{(1^*,i^*),(1^*,\maxState)\}$.
\item{} For $j\in \{2,\dots,i\}$, the state $j^*\in S_R$ has $A_{j^*}=\{(j^*,i^*),(j^*,(j-1)^*)\}$.
\end{itemize}
There is a illustration of $H(4)$ in Figure \ref{fig:h4}.

Let $i$ be some number. It is clear that $\val(H(i)^2_{\maxState})=\frac{1}{2}$. It is also easy to see that $\val(H(i)_i)=1$, but that the time to reach $\perp$ from $i$ is quite long. Hence, one can deduce that there must be a $k$ ($k$ depends on $i$) such that for all $k'\geq k$ it is an optimal strategy in $H(i)_{\maxState}^{k'}$ to choose $(\maxState,i)$ and for all $2\leq k''<k$ it is an optimal strategy in $H(i)_{\maxState}^{k''}$ to choose $(\maxState,h)$. In case there are multiple such numbers, let $k$ be the smallest. The number $k-1$ is then the smallest number of moves of the pebble to reach $\perp$ from $i$, such that that occurs with probability  $\geq \frac{1}{2}$ (to simplify the proofs we will assume equality). 

Let $p^t$ be the probability for the pebble to reach $\maxState$ from $i^*$ in $t$ or less moves (note that this is also the probability to reach $\perp$ in $t$ moves or less from $i$). It is clear that $p^t$ is equal to the
probability that a sequence of $t$ fair coin tosses contains $i$ consecutive tails. This is known to be exactly
$1-F_{t+2}^{(i)}/2^t$, where $F_{t+2}^{(i)}$ is the $(t+2)$'nd Fibonacci
$i$-step number, i.e. the number given by the linear homogeneous recurrence
$F_c^{(i)} = \sum_{j=1}^{i} F_{c-j}^{(i)}$ and the
boundary conditions $F_{c}^{(i)} = 0$, for $c \leq 0$,
$F_{1}^{(i)} = F_{2}^{(i)} = 1$ (this fact is also mentioned in Ibsen-Jensen and Miltersen~\cite{IJM11}).

The next lemmas will prove various properties of $p^t$, $F^{(i)}_a$ and $k$. We will first show two technical lemmas that will be used in many of the remaining lemmas. Next, we will show that $k$ is exponential in $i$ and show various bounds on $p^t$. We will use all that to show that the number of states in the game is a lower bound on the memory requirement for $\epsilon$-optimal counter-based strategies.

\begin{lemma}\label{lem:Fai}
Let $i$ and $a\geq i+3$ be given. Then \[F_a^{(i)}\leq (2-2^{-i-1})F_{a-1}^{(i)}\]
Let $b\geq 3$ be given. Then \[F_b^{(i)}\leq 2F_{b-1}^{(i)}\]
\end{lemma}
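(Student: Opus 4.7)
The plan is to prove the second (easier) bound first, since the first will follow by iterating it. The main tool is the standard telescoping trick for linear homogeneous recurrences: if one writes out the recurrence for $F_b^{(i)}$ and for $F_{b-1}^{(i)}$ and subtracts, the overlapping middle terms cancel and one obtains
\[
F_b^{(i)} - F_{b-1}^{(i)} = F_{b-1}^{(i)} - F_{b-1-i}^{(i)},
\]
i.e.\ $F_b^{(i)} = 2 F_{b-1}^{(i)} - F_{b-1-i}^{(i)}$. Since $F_c^{(i)} \geq 0$ for every $c$, the second inequality $F_b^{(i)} \leq 2 F_{b-1}^{(i)}$ is immediate. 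I would also quickly verify the boundary case $b=3$ by hand (where $F_{b-1-i}^{(i)}$ may fall into the region where the recurrence is not directly applicable): here $F_3^{(i)} = 2 = 2 \cdot F_2^{(i)}$, so the bound holds with equality.

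For the first inequality, I would reuse the identity $F_a^{(i)} = 2 F_{a-1}^{(i)} - F_{a-1-i}^{(i)}$ and observe that to obtain a factor of $(2 - 2^{-i-1})$ it suffices to prove $F_{a-1-i}^{(i)} \geq 2^{-i-1} F_{a-1}^{(i)}$, or equivalently $F_{a-1}^{(i)} \leq 2^{i+1} F_{a-1-i}^{(i)}$. The natural way to get such an upper bound is to iterate the second inequality $i$ times, walking down from index $a-1$ to index $a-1-i$ in single steps, which yields the stronger bound $F_{a-1}^{(i)} \leq 2^i F_{a-1-i}^{(i)}$. The hypothesis $a \geq i+3$ is exactly what is needed so that every intermediate index stays in the region $b \geq 3$ where the second inequality is available, and hence the iteration is legal.

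Combining the two gives $F_a^{(i)} \leq 2 F_{a-1}^{(i)} - 2^{-i} F_{a-1}^{(i)} = (2 - 2^{-i}) F_{a-1}^{(i)} \leq (2 - 2^{-i-1}) F_{a-1}^{(i)}$, which is the desired bound (with a tiny amount of slack, presumably intended to simplify later arithmetic). There is no real obstacle: the only point that needs slight care is lining up the index range for the telescoping identity with the boundary conditions $F_c^{(i)} = 0$ for $c \leq 0$ and $F_1^{(i)} = F_2^{(i)} = 1$, so that both the identity and the $i$-fold iteration are valid throughout. Once that bookkeeping is done, both inequalities drop out in a few lines.
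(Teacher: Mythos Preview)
Your proposal is correct and follows essentially the same route as the paper: both derive the identity $F_b^{(i)} = 2F_{b-1}^{(i)} - F_{b-1-i}^{(i)}$ for $b\ge 3$, use nonnegativity to get the second bound, and then iterate the second bound to lower-bound the trailing term and obtain the first. Your write-up is in fact a bit more careful than the paper's---you have the correct index $b-1-i$ (the paper writes $b-i$), you explicitly check the edge case $b=3$, and you note precisely why the hypothesis $a\ge i+3$ is needed for the $i$-fold iteration.
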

\begin{proof} We can see that \[F_b^{(i)}=\sum_{j=1}^{i} F_{b-j}^{(i)}=2F_{b-1}^{(i)}-F_{b-i}^{(i)}\] for $b\geq 3$. Hence we have that $F_b^{(i)}\leq 2F_{b-1}^{(i)}$. 

We therefore have that $F_{a-i}^{(i)}\geq 2^{-i-1}F_{a-1}^{(i)}$ and we can deduce that \[F_a^{(i)}\leq 2F_{a-1}^{(i)}-2^{-i-1}F_{a-1}^{(i)}.\]
The desired result follows.
\end{proof}

Now for the proof that $k$ is exponential in $i$.

\begin{lemma}\label{lem:kexp}
For all $i$, we have that $k\geq 2^{i-2}+i$. 
\end{lemma}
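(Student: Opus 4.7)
The plan is to recast the lemma as a claim about the Fibonacci $i$-step numbers. Since $p^t = 1 - F_{t+2}^{(i)}/2^t$, the inequality $p^t < 1/2$ is equivalent to $F_{t+2}^{(i)} > 2^{t-1}$. Writing $a = t+2$ and $g_a := F_a^{(i)}/2^{a-2}$, it suffices to prove $g_a > 1/2$ for every $a \leq 2^{i-2}+i$; this forces $p^t < 1/2$ for every $t \leq 2^{i-2}+i-2$, and therefore $k-1 \geq 2^{i-2}+i-1$, which is the desired bound.

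The first step is to pin down $F_c^{(i)}$ for the small values of $c$. For $2 \leq c \leq i+1$ the sum $\sum_{j=1}^{i} F_{c-j}^{(i)}$ collapses to $F_{c-1}^{(i)} + \cdots + F_1^{(i)}$, since the remaining terms vanish by the boundary conditions; a short induction then gives $F_c^{(i)} = 2^{c-2}$ throughout this range. Summing these values yields $F_{i+2}^{(i)} = \sum_{c=2}^{i+1} F_c^{(i)} = 2^i - 1$. Hence $g_c = 1$ for $2 \leq c \leq i+1$ and $g_{i+2} = 1 - 2^{-i}$, so the desired inequality $g_a > 1/2$ is immediate for $a \leq i+2$.

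For $a \geq i+3$ I would use the identity $F_a^{(i)} = 2 F_{a-1}^{(i)} - F_{a-i-1}^{(i)}$, obtained by subtracting the defining recurrence for $F_{a-1}^{(i)}$ from that for $F_a^{(i)}$. Dividing by $2^{a-2}$ gives the crucial recursion $g_a = g_{a-1} - F_{a-i-1}^{(i)}/2^{a-2}$. The bound $F_c^{(i)} \leq 2^{c-2}$ for $c \geq 2$ follows by an easy induction from $F_c^{(i)} \leq 2 F_{c-1}^{(i)}$, so the subtracted term is at most $2^{a-i-3}/2^{a-2} = 2^{-i-1}$. Telescoping from $a = i+2$ I obtain $g_a \geq (1 - 2^{-i}) - (a-i-2)\cdot 2^{-i-1}$ for every $a \geq i+2$.

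A short arithmetic check closes the argument: solving $(1 - 2^{-i}) - (a-i-2)\cdot 2^{-i-1} > 1/2$ gives $a < 2^i + i$, which for every $i \geq 2$ comfortably contains the target range $a \leq 2^{i-2}+i$. The main delicate point is the choice of starting value for the telescoping: I need the exact evaluation $g_{i+2} = 1 - 2^{-i}$ rather than a cruder upper bound, because a per-step loss of $2^{-i-1}$ can only be tolerated if the initial deficit from $1$ is itself on the scale $2^{-i}$. Once that base case is secured, the remaining verification is routine.
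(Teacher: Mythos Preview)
Your argument is correct. Both your proof and the paper's reduce to the same high-level idea---bound how fast $p^t$ can increase per step and telescope---but the mechanics differ. The paper obtains the increment bound $p^a \leq p^{a-1} + 2^{-i}$ in one line by a probabilistic union bound (either the first $i$ flips are all tails, or $i$ consecutive tails already occur among the last $a-1$ flips), then telescopes backward from $p^k \geq 1/2$ over $2^{i-2}$ steps and invokes $p^{i-1}=0$. You instead work algebraically through the Fibonacci representation: you derive the exact two-term recurrence $g_a = g_{a-1} - F_{a-i-1}^{(i)}/2^{a-2}$ for $g_a = F_a^{(i)}/2^{a-2}$, bound the subtracted term by $2^{-i-1}$ using $F_c^{(i)}\leq 2^{c-2}$, and telescope forward from the exact initial value $g_{i+2}=1-2^{-i}$. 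Your per-step bound is a factor of two sharper than the paper's, and in fact your telescoped inequality yields the stronger range $a < 2^i + i$, though only $a \leq 2^{i-2}+i$ is needed. The trade-off is that the paper's argument is shorter and requires no Fibonacci bookkeeping, while yours makes the rate of decay fully explicit.
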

\begin{proof}
We will first show that $p^a\leq p^{a-1}+2^{-i}$. We can divide the event that there are $i$ consecutive tails into two possibilities out of $t$ fair coin tosses. Either the first $i$ coin tosses were tails or there are $i$ consecutive tails in the last $t-1$ coin tosses (or both). The first case happens with probability $2^{-i}$ and the last with probability $p^{a-1}$. We can then apply union bounds and get that $p^a\leq p^{a-1}+2^{-i}$. 
Clearly we have that $p^{i-1}=0$ and that $p^a$ is increasing in $a$. But we also have that \[
\begin{split}
p^k& \leq 2^{i-2} 2^{-i} + p^{k-2^{i-2}}\Rightarrow\\
\frac{1}{2} & \leq \frac{1}{4} + p^{k-2^{i-2}}\Rightarrow\\
\frac{1}{4} & \leq p^{k-2^{i-2}},
\end{split}
\]
which means that $k> 2^{i-2}+i-1$.
\end{proof}

\begin{lemma}\label{lem:k}
Let $i$ be given. The number $k$ is such that 
\[e^{\frac{-1}{8}} \geq (1-2^{-i-2})^{k}\geq \frac{1}{4}\]
and such that
\[e^{\frac{-1}{8}} \geq (1-2^{-i-2})^{k-i}\geq \frac{1}{2}\]
\end{lemma}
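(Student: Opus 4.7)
The plan rests on two ingredients. From Lemma~\ref{lem:Fai}, the ratio $F^{(i)}_{t+2}/F^{(i)}_{t+1}\le 2-2^{-i-1}$ holds for $t\ge i+1$; dividing by $2^t$ and writing $q^t:=1-p^t=F^{(i)}_{t+2}/2^t$ (the probability of \emph{not} seeing $i$ consecutive tails in $t$ fair tosses) converts this into
\[ q^t \;\le\; (1-2^{-i-2})\, q^{t-1}\qquad \text{for every } t\ge i+1. \]
The second ingredient is the simplifying assumption $q^{k-1}=1/2$ stated just before the lemma.

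For the lower bounds, I iterate the displayed recurrence from $t=k-1$ down to $t=i+1$ to obtain $q^{k-1}\le (1-2^{-i-2})^{k-1-i}\, q^i$. Since $q^i = 1-2^{-i}\le 1$, rearranging gives $(1-2^{-i-2})^{k-1-i}\ge 1/2$. Multiplying through by one more factor $(1-2^{-i-2})$ and using the trivial inequality $(1-2^{-i-2})\ge(1-2^{-i})$ promotes this to $(1-2^{-i-2})^{k-i}\ge 1/2$. Then
\[ (1-2^{-i-2})^{k}\;=\;(1-2^{-i-2})^{i}(1-2^{-i-2})^{k-i}\;\ge\; \tfrac12(1-2^{-i-2})^{i}\;\ge\;\tfrac14, \]
where the last step is Bernoulli's inequality $(1-2^{-i-2})^i\ge 1-i\cdot 2^{-i-2}\ge 1/2$, easy to verify for every $i\ge 1$.

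For the upper bounds, I iterate the union estimate $p^a\le p^{a-1}+2^{-i}$ from the proof of Lemma~\ref{lem:kexp} all the way down to $p^{i-1}=0$, strengthening it to $p^a\le (a-i+1)\,2^{-i}$. Setting $a=k-1$ with $p^{k-1}=1/2$ yields $k\ge 2^{i-1}+i$, and therefore both $(k-i)\cdot 2^{-i-2}\ge 1/8$ and $k\cdot 2^{-i-2}\ge 1/8$. Applying the standard inequality $1-x\le e^{-x}$ then delivers $(1-2^{-i-2})^{k-i}\le e^{-1/8}$ and $(1-2^{-i-2})^{k}\le e^{-1/8}$ simultaneously. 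The main subtlety lies in the constants: the statement of Lemma~\ref{lem:kexp} only guarantees $k\ge 2^{i-2}+i$, which would yield $e^{-1/16}$ in place of $e^{-1/8}$, so one must not truncate the union-bound iteration early.
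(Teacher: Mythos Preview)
Your argument is correct. For the lower bounds $(1-2^{-i-2})^{k-i}\ge\tfrac12$ and $(1-2^{-i-2})^{k}\ge\tfrac14$ you do essentially what the paper does: both proofs iterate Lemma~\ref{lem:Fai} (you phrase it as $q^t\le(1-2^{-i-2})q^{t-1}$, the paper works directly with the $F^{(i)}$'s and the identity $F_{k+2}^{(i)}/2^k=\tfrac12$) and then pass from $k-i$ to $k$ (you via Bernoulli, the paper via $k>2i$ from Lemma~\ref{lem:kexp}). These are the same argument in different clothing.

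For the upper bounds you take a genuinely different and in fact cleaner path. The paper invokes Lemma~\ref{lem:kexp} as stated ($k-i\ge 2^{i-2}$) and then writes $(1-2^{-i-2})^{2^{i-2}}=\bigl((1-2^{-i-2})^{2^{i+2}}\bigr)^{1/8}$; but $2^{i-2}/2^{i+2}=1/16$, so that chain actually only delivers $e^{-1/16}$. You sidestep this by re-iterating the union estimate $p^a\le p^{a-1}+2^{-i}$ from the proof of Lemma~\ref{lem:kexp} all the way down to $p^{i-1}=0$, obtaining the sharper $k-i\ge 2^{i-1}$, and then $1-x\le e^{-x}$ gives exactly $e^{-1/8}$. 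So your observation at the end---that one must not truncate the union-bound iteration early---is precisely what is needed to match the constant in the statement, and your version is the one that actually proves the lemma as written.
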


\begin{proof}
We have that $1-F_{k+2}^{(i)}/2^k=\frac{1}{2}$, which we can then use to show that \[
\begin{split}
1-F_{k+2}^{(i)}/2^k&=\frac{1}{2}\Rightarrow\\
F_{k+2}^{(i)}/2^k&=\frac{1}{2}\Rightarrow\\
\frac{(2-2^{-i-1})^{k-i}2^i F_{2}^{(i)}}{2^k}&\geq \frac{1}{2}\Rightarrow\\
(1-2^{-i-2})^{k-i}&\geq \frac{1}{2}
\end{split}
\]
where we used Lemma \ref{lem:Fai} for the second implication. We used that $F_{2}^{(i)}=1$ for the third implication.
Since $k\geq 2^{i-2}+i> 2i$ by Lemma \ref{lem:kexp}, we also have that $(1-2^{-i-2})^{k}\geq \frac{1}{4}$.

But we can also use Lemma \ref{lem:kexp} more directly. Notice that since $i\geq 12$ we have that $2^{i+2}\geq 72$. We have that, \[
(1-2^{-i-2})^{k-i}\leq (1-2^{-i-2})^{2^{i-2}}=((1-2^{-i-2})^{2^{i+2}})^{\frac{1}{8}}\leq e^{\frac{-1}{8}},
\]
where we used that $\lim_{x\rightarrow \infty} (1-x^{-1})^x=e^{-1}$ and that $(1-x^{-1})^x$ is increasing in $x$ for $x\geq 1$. We also have that $e^{\frac{-1}{8}} \geq (1-2^{-i-2})^{k}$, by the same argument.
\end{proof}

\begin{lemma}\label{lem:half t}
For all $i$ and $t$, we have \[
p^{2t-2i}\leq 2p^t
\]
\end{lemma}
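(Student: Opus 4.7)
The plan is to use the combinatorial interpretation of $p^t$ as the probability that a sequence of $t$ fair coin tosses contains a run of $i$ consecutive tails, together with a union bound over two overlapping ``windows'' of length $t$ that cover a sequence of length $2t-2i$.

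First I would dispose of the easy case $t\leq 2i$. Here $2t-2i\leq t$, and since $p^a$ is clearly monotone nondecreasing in $a$ (more tosses can only make a long tail-run more likely), we get $p^{2t-2i}\leq p^t\leq 2p^t$ immediately.

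The main case is $t>2i$. Here I would fix a sequence $X_1,\dots,X_{2t-2i}$ of fair coin tosses and define two windows of length $t$: Window $A$ covers positions $1,\dots,t$, and window $B$ covers positions $t-2i+1,\dots,2t-2i$. The crucial geometric observation is that every length-$i$ block of consecutive positions inside $\{1,\dots,2t-2i\}$ lies entirely inside $A$ or entirely inside $B$: a run starting at position $s$ has $s\leq 2t-3i+1$, and if $s>t-i+1$ then $s\geq t-i+2\geq t-2i+1$ (using $i\geq 1$), so the run fits inside $B$; otherwise it fits inside $A$. Hence the event ``the full sequence contains $i$ consecutive tails'' is contained in the union of the event ``$A$ contains $i$ consecutive tails'' and the event ``$B$ contains $i$ consecutive tails,'' each of which has probability exactly $p^t$. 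A union bound then gives $p^{2t-2i}\leq 2p^t$.

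The only delicate part of this plan is verifying the covering claim, i.e., that the two windows of length $t$ I picked really do contain every possible length-$i$ run in a length-$(2t-2i)$ sequence; this amounts to a short integer inequality and should cause no real difficulty. No recurrence manipulation (Lemma \ref{lem:Fai}) is needed here, since the argument is purely probabilistic and the identity $p^t=1-F_{t+2}^{(i)}/2^t$ plays no role in this particular lemma.
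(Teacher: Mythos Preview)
Your proof is correct and is essentially the same argument as the paper's: after the substitution $t'=t-i$ the paper covers a length-$2t'$ sequence by the two length-$(t'+i)$ windows $[1,t'+i]$ and $[t'-i+1,2t']$, which are exactly your windows $A$ and $B$, and then applies the same union bound. Your write-up is a bit more careful (you separate out the degenerate case $t\le 2i$ and verify the covering inequality explicitly), but the underlying idea is identical.
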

\begin{proof}
Let $t'=t-i$. Hence, we need to show that $p^{2t'}\leq 2p^{t'+i}$.
The proof comes from the fact that to have $i$ consecutive tails out of $2t'$ fair coin tosses, the $i$ consecutive tails must either start in the first half or end in the second half (or both). But to start in the first half means that it must end in the first $t'+i$ elements. Therefore we can overestimate that probability with $p^{t'+i}$. Similar with the second half. We can then add them together by union bound and the result follows.
\end{proof}

\begin{lemma}\label{lem:upper bound on pdk}
Let $i\geq 12$ and $\frac{1}{10}<d<1$ be given. Then $p^{dk}\leq 1-\frac{e^{\frac{1-d}{8}}}{2}<\frac{1}{2}$.
\end{lemma}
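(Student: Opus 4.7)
The plan is to unfold $p^{dk}$ using its closed form $p^t = 1 - F_{t+2}^{(i)}/2^t$ and reduce the statement to a lower bound on the ratio $F_{dk+2}^{(i)}/2^{dk}$. Specifically, I want to show
\[F_{dk+2}^{(i)}/2^{dk} \;\geq\; e^{(1-d)/8}/2,\]
which immediately yields $p^{dk} \leq 1 - e^{(1-d)/8}/2$, and since $d < 1$ makes $e^{(1-d)/8} > 1$, the strict inequality with $1/2$ is automatic.

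The first step is to relate $F_{dk+2}^{(i)}$ to $F_{k+2}^{(i)}$. Noting that $k \geq 2^{i-2}+i$ by Lemma \ref{lem:kexp} and that $d > 1/10$, $i \geq 12$, we have $dk+2 \geq i+3$, so iterating the first inequality of Lemma \ref{lem:Fai} exactly $(1-d)k$ times (stepping the index down from $k+2$ to $dk+2$) gives
\[F_{k+2}^{(i)} \;\leq\; (2-2^{-i-1})^{(1-d)k}\, F_{dk+2}^{(i)}.\]
By the defining equation of $k$, namely $F_{k+2}^{(i)}/2^k = 1/2$, we have $F_{k+2}^{(i)} = 2^{k-1}$. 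Dividing through by $2^{dk}$ and rearranging,
\[\frac{F_{dk+2}^{(i)}}{2^{dk}} \;\geq\; \frac{2^{k-1}}{(2-2^{-i-1})^{(1-d)k}\, 2^{dk}} \;=\; \frac{(1-2^{-i-2})^{-(1-d)k}}{2}.\]

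The second step is to control the quantity $(1-2^{-i-2})^{(1-d)k}$ from above. By Lemma \ref{lem:k} we have $(1-2^{-i-2})^{k} \leq e^{-1/8}$. Since $1-d \in (0,1)$ and the base $(1-2^{-i-2})^k$ is a positive number at most $e^{-1/8}<1$, raising both sides to the power $(1-d)$ preserves the inequality, giving $(1-2^{-i-2})^{(1-d)k} \leq e^{-(1-d)/8}$, i.e., $(1-2^{-i-2})^{-(1-d)k} \geq e^{(1-d)/8}$. Combining with the previous display yields the desired bound $F_{dk+2}^{(i)}/2^{dk} \geq e^{(1-d)/8}/2$, hence $p^{dk} \leq 1 - e^{(1-d)/8}/2 < 1/2$.

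The only real obstacle is verifying the side condition $dk+2 \geq i+3$ so that Lemma \ref{lem:Fai}'s recursive bound is applicable throughout all $(1-d)k$ steps, but this is guaranteed by the lower bound $k \geq 2^{i-2}+i$ from Lemma \ref{lem:kexp} together with the hypotheses $d > 1/10$ and $i \geq 12$. Everything else is routine manipulation of the two inequalities provided by Lemmas \ref{lem:Fai} and \ref{lem:k}.
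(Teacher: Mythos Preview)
Your proof is correct and follows essentially the same route as the paper's: you reduce the claim to lower-bounding $F_{dk+2}^{(i)}/2^{dk}$, apply Lemma~\ref{lem:Fai} iteratively to compare $F_{dk+2}^{(i)}$ with $F_{k+2}^{(i)}$, use the defining relation $F_{k+2}^{(i)}/2^k=\tfrac12$, and finish with the bound $(1-2^{-i-2})^k\le e^{-1/8}$ from Lemma~\ref{lem:k}. The only cosmetic difference is that you state the side condition as $dk+2\ge i+3$ rather than the paper's $dk>i$, which is equivalent for the purpose of iterating Lemma~\ref{lem:Fai}.
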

\begin{proof}
Since $d>\frac{1}{10}$, we have that $dk>i$, by Lemma \ref{lem:kexp} and because $i\geq 12$.
We will show that $F_{dk+2}^{(i)}/2^{dk}\geq \frac{e^{\frac{1-d}{8}}}{2}$.
We have that \[
\begin{split}
F_{dk+2}^{(i)}/2^{dk}& \geq \frac{F_{k+2}^{(i)}}{(2-2^{-i-1})^{(1-d)k} 2^{dk}}\\
& =\frac{F_{k+2}^{(i)}}{(1-2^{-i-2})^{(1-d)k} 2^{k}}\\
& =\frac{1}{2\cdot (1-2^{-i-2})^{(1-d)k}}\\
& =\frac{1}{2\cdot ((1-2^{-i-2})^{k})^{1-d}}\\
& \geq \frac{1}{2\cdot (e^{-\frac{1}{8}})^{1-d}}\\
& = \frac{e^{\frac{1}{8}(1-d)}}{2}
\end{split}
\]
where we used Lemma \ref{lem:Fai} for the first inequality, Lemma \ref{lem:k} for the second and that $\lim_{x\rightarrow \infty} (1-x^{-1})^x=e^{-1}$ and that $(1-x^{-1})^x$ is increasing in $x$ for $x\geq 1$ for the third.
\end{proof}


\begin{lemma}\label{lem:greater than k}
Let $i\geq 12$ and $0<d$ be given. Then $p^{(1+d)k}\geq 1-(e^{\frac{-d}{8}}) \frac{1}{2}>\frac{1}{2}$.
\end{lemma}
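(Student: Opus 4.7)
The plan is to mirror the technique of Lemma \ref{lem:upper bound on pdk}, but this time bounding $F_{(1+d)k+2}^{(i)}$ from above rather than below. Since $p^t = 1 - F_{t+2}^{(i)}/2^t$, the claim $p^{(1+d)k} \geq 1 - e^{-d/8}/2$ is equivalent to $F_{(1+d)k+2}^{(i)}/2^{(1+d)k} \leq e^{-d/8}/2$, and that is what I would establish.

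First I would apply Lemma \ref{lem:Fai} iteratively $dk$ times, starting from $F_{(1+d)k+2}^{(i)}$ and descending to $F_{k+2}^{(i)}$. This is legitimate because each application requires the index to be at least $i+3$, and the smallest index encountered is $k+2$, which is $\geq i+3$ by Lemma \ref{lem:kexp} (since $i \geq 12$ gives $k \geq 2^{i-2}+i \gg i+1$). This yields
\[
F_{(1+d)k+2}^{(i)} \leq (2-2^{-i-1})^{dk}\, F_{k+2}^{(i)}.
\]
Dividing by $2^{(1+d)k}$ and factoring a $2^{dk}$ into the bracketed term as in the proof of Lemma \ref{lem:upper bound on pdk} gives
\[
\frac{F_{(1+d)k+2}^{(i)}}{2^{(1+d)k}} \leq (1-2^{-i-2})^{dk}\cdot \frac{F_{k+2}^{(i)}}{2^{k}} = \frac{(1-2^{-i-2})^{dk}}{2},
\]
where the last equality uses the defining relation $F_{k+2}^{(i)}/2^k = 1/2$ from the proof of Lemma \ref{lem:k}.

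Next I would rewrite $(1-2^{-i-2})^{dk} = \bigl((1-2^{-i-2})^k\bigr)^d$ and invoke the upper bound from Lemma \ref{lem:k}, namely $(1-2^{-i-2})^k \leq e^{-1/8}$, to conclude $(1-2^{-i-2})^{dk} \leq e^{-d/8}$. Combined with the previous display this gives $F_{(1+d)k+2}^{(i)}/2^{(1+d)k} \leq e^{-d/8}/2$, and hence $p^{(1+d)k} \geq 1 - e^{-d/8}/2$. The strict inequality $1 - e^{-d/8}/2 > 1/2$ is immediate since $d > 0$ forces $e^{-d/8} < 1$.

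I do not expect any real obstacle: the only subtlety is the same one that appears in Lemma \ref{lem:upper bound on pdk}, namely making sure that the range over which Lemma \ref{lem:Fai} is applied stays in the regime $a \geq i+3$, which is guaranteed by the exponential lower bound on $k$ from Lemma \ref{lem:kexp}. Everything else is a direct symmetric analogue of the earlier argument, with the direction of the Fibonacci $i$-step inequality flipped.
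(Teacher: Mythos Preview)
Your proposal is correct and follows essentially the same route as the paper: apply Lemma~\ref{lem:Fai} $dk$ times to bound $F_{(1+d)k+2}^{(i)}$ above by $(2-2^{-i-1})^{dk}F_{k+2}^{(i)}$, use $F_{k+2}^{(i)}/2^k=\tfrac12$, and then invoke the bound $(1-2^{-i-2})^k\le e^{-1/8}$ from Lemma~\ref{lem:k}. Your additional remark verifying that the indices stay in the range $a\ge i+3$ (via Lemma~\ref{lem:kexp}) is a point the paper leaves implicit.
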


\begin{proof}
We will show that $F_{(1+d)k+2}^{(i)}/2^{(1+d)k}\leq (e^{\frac{-d}{8}}) \frac{1}{2}$.
We have that \[
\begin{split}
F_{(1+d)k+2}^{(i)}/2^{dk}& \leq \frac{(2-2^{-i-1})^{dk}F_{k+2}^{(i)}}{2^{(1+d)k}} \\
& = (1-2^{-i-2})^{dk}\frac{1}{2}\\
& = ((1-2^{-i-2})^{k})^d \frac{1}{2}\\
& \leq (e^{\frac{-1}{8}})^d \frac{1}{2}\\
& = (e^{\frac{-d}{8}}) \frac{1}{2}
\end{split}
\]
where we used Lemma \ref{lem:Fai} for the first inequality and Lemma \ref{lem:k} for the second.
\end{proof}

\begin{figure}
\centering
\begin{tikzpicture}[scale=0.3,->,>=stealth',shorten >=1pt,auto,node distance=4.2cm*0.5,
                    semithick]
      
\tikzstyle{every state}=[fill=white,draw=black,text=black]
\tikzstyle{max}=[state,regular polygon,regular polygon sides=3]
\tikzstyle{min}=[max,regular polygon rotate=180]

    \node[state,accepting] (goal) {$\perp$};
    \node[state] (trap) [left of =goal]{$\top$};
    
    \node[state] (m1) [below of=goal] {1};
    \node[state] (m2) [below of=m1] {2};
    \node[state] (m3) [below of=m2] {3};
    \node[state] (m4) [below of=m3] {4};
    
    \node[max] (max) [below left of=m4] {$\maxState$};
    \node[state] (h) [left of=m1] {$h$};
    
        \node[state] (m1s) [left of=max] {$1^*$};
    \node[state] (m2s) [above of=m1s] {$2^*$};
    \node[state] (m3s) [above of=m2s] {$3^*$};
    \node[state] (m4s) [above of=m3s] {$4^*$};
    
\path 
(m1) edge (goal)
(m1) edge [bend left,in=120] (m4)
(m2) edge (m1)
(m2) edge [bend right] (m4)
(m3) edge (m2)
(m3) edge [bend left] (m4)
(m4) edge (m3)
(m4) edge [loop left] (m4)
(max) edge (m4)
(max) edge (h)
(h) edge (goal)
(h) edge (trap)
(trap) edge [loop left] (trap)
(trap) edge [loop right] (trap)

(m1s) edge (max)
(m1s) edge [bend left,in=120] (m4s)
(m2s) edge (m1s)
(m2s) edge [bend right] (m4s)
(m3s) edge (m2s)
(m3s) edge [bend left] (m4s)
(m4s) edge (m3s)
(m4s) edge [loop right] (m4)

;

\end{tikzpicture}
\caption{The MDP $H(4)$. It is the fourth member of a family that will show that there exist FSSGs where, for a fixed $\epsilon$, all $\epsilon$-optimal counter-based strategies require memory size to be at least $\Omega(i)$. Circle vertices are the coin toss states. The triangle vertex is the max state. The vertex $\perp$ is the terminal state. }
\label{fig:h4}
\end{figure}
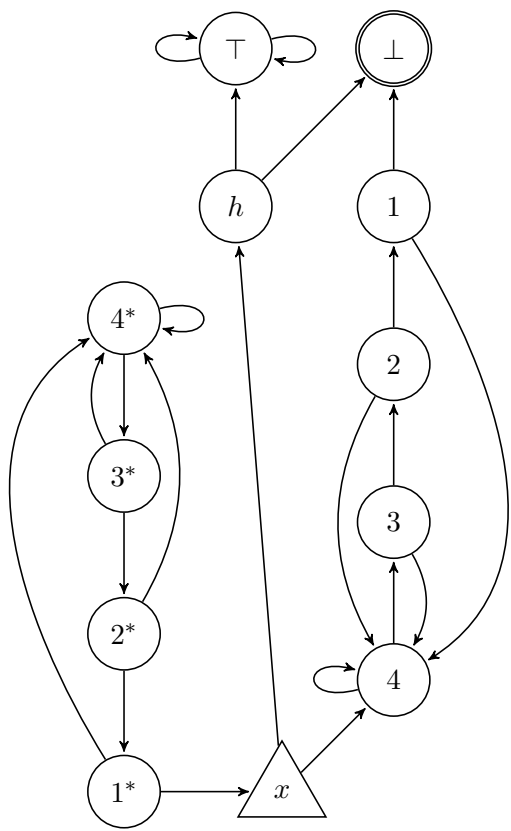

\begin{lemma}
There is an $\epsilon$ such that for all $i\geq 12$, there is a time-bound $T$ such that all $\epsilon$-optimal counter-based strategies for $H(i)^T$ require memory size at least $i-5$.\label{lem:lowern}
\end{lemma}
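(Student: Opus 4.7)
The plan is to fix $T = 2k$ with start state $i^*$, and to show that any counter-based strategy using fewer than $2^{i-5}$ memory states incurs expected loss at least an absolute constant $\epsilon > 0$; taking this $\epsilon$ then proves the lemma. Starting at $i^*$, the hitting time $\tau$ at which the pebble reaches $\maxState$ is exactly the waiting time for $i$ consecutive tails in a fair coin-flip sequence, so $\Pr[\tau \leq t] = p^t$. With $T = 2k$, the unique optimal action at $\maxState$ having arrived at time $\tau$ is $(\maxState, i)$ when $\tau \leq k$ (yielding $p^{T-\tau-1} \geq 1/2$) and $(\maxState, h)$ when $\tau > k$ (yielding $1/2$).

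Next I would exploit the structure of counter-based strategies: the memory sequence $m^0, m^1, \dots$ is state-independent and hence ultimately periodic, with some pre-period length $N$ and period $p$, so $N+p$ equals the total number of distinct memories used. Suppose, toward a contradiction, that $N+p \leq 2^{i-5}$; by Lemma~\ref{lem:kexp} we have $k \geq 2^{i-2} + i$, hence $N+p \ll k$. Once in the periodic phase, the action played at $\maxState$ depends only on the residue $r \equiv \tau \pmod{p}$; denote it $a_r \in \{h, i\}$. Writing $q(\tau) = p^\tau - p^{\tau-1}$ for the first-arrival probability, define
\[
X_r = \sum_{\substack{\tau \in [N, k]\\ \tau \bmod p = r}} q(\tau)\bigl(p^{T-\tau-1} - \tfrac{1}{2}\bigr), \qquad Y_r = \sum_{\substack{\tau \in [k+1, T]\\ \tau \bmod p = r}} q(\tau)\bigl(\tfrac{1}{2} - p^{T-\tau-1}\bigr).
\]
Then the expected loss of $\sigma$ relative to the optimal Markov strategy is at least $\sum_{r=0}^{p-1} \min(X_r, Y_r)$, up to a correction from $\tau < N$ which is negligible under the memory assumption (bounded by $N \cdot \max_\tau q(\tau)$, a vanishing quantity).

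The core of the argument is to lower bound $\sum_r \min(X_r, Y_r)$ by an absolute constant. Using Lemmas~\ref{lem:upper bound on pdk} and~\ref{lem:greater than k} to control $p^{dk}$ for $d$ in a fixed range around $1$, together with the facts $\Pr[\tau \in [i, k]] = \Omega(1)$ and $\Pr[\tau \in [k{+}1, 2k]] = \Omega(1)$, one checks that the aggregate losses $\sum_r X_r$ and $\sum_r Y_r$ are each $\Omega(1)$. The essential technical step is then to show that these aggregate losses are spread roughly evenly across the $p$ residue classes: since $p \ll k$, the ``loss density'' $\tau \mapsto q(\tau) \, |p^{T-\tau-1} - 1/2|$ varies little over windows of length $p$, so every residue receives a $\Theta(1/p)$ share of each sum; hence $\min(X_r, Y_r) = \Theta(1/p)$, and summation over the $p$ residues yields $\Omega(1)$.

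Choosing $\epsilon$ strictly smaller than this constant yields the desired contradiction with the $\epsilon$-optimality of $\sigma$ and establishes the memory lower bound of $i-5$. The main obstacle I expect is the even-distribution claim: it requires comparing the values of $q(\tau)$ at pairs of points separated by multiples of $p$ and showing that their ratios stay bounded. The Fibonacci-type bounds of Lemma~\ref{lem:Fai} should be the natural workhorse, with a little bookkeeping to absorb the short initial phase and to handle the boundary behaviour near the flip point $\tau = k$ and the tail $\tau$ near $T$.
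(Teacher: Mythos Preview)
Your plan is sound and reaches the same conclusion, but the organization is genuinely different from the paper's. The paper takes $T=2k+1$ and isolates two explicit arrival-time windows $A=[k/5,4k/5]$ and $B=[6k/5,9k/5]$: it shows each carries constant hitting mass, that the strictly optimal action at $\maxState$ differs between them, and that $q^t$ is nearly flat on each window; then a pigeonhole step (if the strategy has fewer than $k/5$ memories, the periodic memory sequence matches most times in $A$ to times in $B$ with identical memory) forces the strategy to make the wrong choice on a constant fraction of one window. Your residue-class loss decomposition is a cleaner bookkeeping device---it tracks the actual sub-optimality directly and would extend more naturally to games with more than two actions---but the technical content needed to push it through (pointwise control of $q^\tau$ and of $|p^{T-\tau-1}-\tfrac12|$ on intervals of length $\Theta(k)$) is exactly what the paper's Lemmas~\ref{lem:Fai}, \ref{lem:k}, \ref{lem:upper bound on pdk}, \ref{lem:greater than k} provide.

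Two small corrections to your write-up. First, the ``correction from $\tau<N$'' is not vanishing: with $N\le 2^{i-5}$ and $\max_\tau q(\tau)=2^{-i}$ it is only $\le 2^{-5}$. But you do not need any correction at all, since your sums $X_r,Y_r$ already begin at $\tau=N$ and the strategy's loss on $\tau<N$ is nonnegative. Second, the justification ``the loss density varies little over windows of length~$p$'' is not the right lemma: the density $q(\tau)\,|p^{T-\tau-1}-\tfrac12|$ dips to zero near $\tau=k$, so local smoothness alone does not force even spread across residues. What does work---and is precisely what the paper's choice of windows $A$ and $B$ encodes---is that on each of $[k/5,4k/5]$ and $[6k/5,9k/5]$ the density is bounded \emph{below} by $c\cdot 2^{-i}$ for an absolute $c>0$ (combine $q(\tau)=2^{-i-1}(1-p^{\tau-1-i})$ with Lemmas~\ref{lem:upper bound on pdk} and~\ref{lem:greater than k} and the paper's bound on $p^{6k/5}$), and every residue class modulo $p$ meets each window in at least $3k/(5p)-1=\Omega(k/p)$ points; since $k\ge 2^{i-2}$ by Lemma~\ref{lem:kexp}, this gives $X_r,Y_r=\Omega(1/p)$ as you want. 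With that replacement your argument goes through.
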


The proof basically goes as follows: The pebble starts at $i^{*}$ with $2k+1$ moves remaining. First we show that there is a super-constant probability for the pebble to reach $\maxState$ using somewhere between $\frac{k}{5}$ and $\frac{4k}{5}$ moves. In that case there is at least $\frac{6k}{5}+1$ moves left. We then show that there is some number $p>\frac{1}{2}$ independent of $i$ such that the probability to reach $\perp$ from $i$ in $\frac{6k}{5}$ is more than $p$. Secondly we show that there is a super-constant probability for the pebble to reach $\maxState$ using somewhere between $\frac{6k}{5}$ and $\frac{9k}{5}$ moves.  In that case there is at most $\frac{4k}{5}+1$ moves left. We then show that there is some number $q<\frac{1}{2}$ independent of $i$ such that the probability to reach $\perp$ from $i$ in $\frac{4k}{5}$ is less than $q$. We can then pick $\epsilon$ such that any $\epsilon$-optimal strategy must distinguish between plays that used between $\frac{k}{5}$ and $\frac{4k}{5}$ moves to reach $\maxState$ from $i^{*}$ and plays that used between $\frac{6k}{5}$ and $\frac{9k}{5}$ moves to reach $\maxState$ from $i^{*}$. We then show that that requires at least $O(k)$ distinct states of memory, and the result then follows from $k$ being exponential in $i$, by Lemma \ref{lem:kexp}.

\begin{proof}The probability  for the pebble to reach $\maxState$ using somewhere between $\frac{k}{5}$ and $\frac{4k}{5}$ moves is 
\[
\begin{split}
p^{\frac{4k}{5}}-p^{\frac{k}{5}}&=1-F_{\frac{4k}{5}+2}^{(i)}/2^{\frac{4k}{5}}-(1-F_{\frac{k}{5}+2}^{(i)}/2^{\frac{k}{5}})\\
&=\frac{2^{\frac{3k}{5}}F_{\frac{k}{5}+2}^{(i)}-F_{\frac{4k}{5}+2}^{(i)}}{2^{\frac{4k}{5}}}\\
&\geq \frac{2^{\frac{3k}{5}}F_{\frac{k}{5}+2}^{(i)}-(2-2^{-i-1})^{\frac{3k}{5}}F_{\frac{k}{5}+2}^{(i)}}{2^{\frac{4k}{5}}}\\
&= \frac{(2^{\frac{3k}{5}}-(2-2^{-i-1})^{\frac{3k}{5}})F_{\frac{k}{5}+2}^{(i)}}{2^{\frac{4k}{5}}}\\
&= \frac{(1-(1-2^{-i-2})^{\frac{3k}{5}})F_{\frac{k}{5}+2}^{(i)}}{2^{\frac{k}{5}}}\\
&= (1-(1-2^{-i-2})^{k})(1-p^{\frac{k}{5}})\\
&\geq (1-e^{\frac{-1}{8}} )\frac{e^{\frac{1}{10}}}{2}
\end{split}
\]
where we used Lemma \ref{lem:Fai} for the first inequality and Lemma \ref{lem:k} and Lemma \ref{lem:upper bound on pdk} for the second.

In this case we have at least $\frac{6k}{5}+1$ moves left. Therefore if the player chooses to move to $i$, there are at least $\frac{6k}{5}$ moves left. In that case, by Lemma \ref{lem:greater than k}, the pebble will reach $\perp$ with probability at least $1-(e^{\frac{-3}{40}})\frac{1}{2}>\frac{1}{2}$. In both cases we see that the probability is strictly separated from $\frac{1}{2}$.

The probability  for the pebble to reach $\maxState$ using somewhere between $\frac{6k}{5}$ and $\frac{9k}{5}$ moves can be calculated similar to between  $\frac{k}{5}$ and $\frac{4k}{5}$ moves. We end up with
\[
p^{\frac{9k}{5}}-p^{\frac{6k}{5}}\geq (1-e^{\frac{-1}{8}} )(1-p^{\frac{6k}{5}}).
\]

Hence, we need a upper bound on $p^{\frac{6k}{5}}$, which is smaller than 1 and does not depend on $k$ or $i$. We can get that by noting that $\frac{6k}{5}\leq \frac{8k}{5}-2i$, because of Lemma \ref{lem:kexp} and that $i\geq 12$. Hence we can apply Lemma \ref{lem:half t} followed by Lemma \ref{lem:upper bound on pdk} and get that $p^{\frac{6k}{5}}\leq 2p^{\frac{4k}{5}}\leq 2(1-\frac{e^{\frac{1}{40}}}{2})<1$.

In this case we have at most $\frac{4k}{5}+1$ moves left. Therefore if the player chooses to move to $i$, there are at most $\frac{4k}{5}$ moves left. In that case, by Lemma \ref{lem:upper bound on pdk}, the pebble will reach $\perp$ with probability at most $1-\frac{e^{\frac{1}{40}}}{2}<\frac{1}{2}$.

Let $\sigma$ be some $\epsilon$-optimal counter-based strategy and assume that $\sigma$ uses less than $\frac{k}{5}-1$ states. We will show that if $\epsilon$ is some sufficiently low constant, we get a contradiction and hence all $\epsilon$-optimal counter-based strategies uses at least $\frac{k}{5}$ states. Our result than follows from Lemma \ref{lem:kexp}.

Let $m^0=\emptyset$ and $m^i=\sigma_u(m^{i-1})$. Since $\sigma$ uses less than $\frac{k}{5}$ states, then $m^a=m^b$ for some $a<b<\frac{k}{5}$. Hence also $m^{a+c}=m^{b+c}$ for all $c\geq 0$, by definition. But then $m^{a+c}=m^{a+c+(b-a)d}$ for all $c$ and $d$ greater than 0. Hence, we can make a one to one map between memory  $m^a$ for $a\in A=\{\frac{k}{5},\dots,\frac{4k}{5}\}$ and some memory  $m^b$ for $b\in B=\{\frac{6k}{5},\dots,\frac{9k}{5}\}$, such that $m^a=m^b$, except for up to $\frac{k}{5}$ of them, which is smaller than a third of the size of both $A$ and $B$. 

Let $q^t$ be the the probability to reach $\maxState$ from $i^{*}$ using exactly $t$ moves of the pebble. For $t\geq i+1$ we have that \[q^t=p^t-p^{t-1}=\frac{2F_{t+1}^{(i)}-F_{t+2}^{(i)}}{2^t}=\frac{F_{t+1-i}^{(i)}}{2^t}=2^{-i-1}(1-p^{t-1-i}).\] 
(To have a sequence of $i$ tails after precisely $t$ coin flips for $t>i$, we need to have failed to get that many tails in a row for the first $t-1-i$ coin flips and then gotten a head followed by $i$ tails, which is also what our expression tells us.)

We see that $q^t$ is decreasing for $t\geq i+1$, because $p^t$ is increasing. We can therefore calculate the probability to end up at $\maxState$ using a specific amount of time compared to all other times in $A$ as 
\[
\begin{split}
\frac{q^{\frac{k}{5}}}{q^{\frac{4k}{5}}}&=\frac{2^{-i-1}(1-p^{\frac{k}{5}-1-i})}{2^{-i-1}(1-p^{\frac{4k}{5}-1-i})}\\
&=\frac{\frac{F_{\frac{k}{5}+1-i}^{(i)}}{2^{\frac{k}{5}-1-i}}}{\frac{F_{\frac{4k}{5}+1-i}^{(i)}}{2^{\frac{4k}{5}-1-i}}}\\
&\geq \frac{ F_{\frac{k}{5}+1-i}^{(i)}2^{\frac{3k}{5}}}{(2-2^{-i-1})^{\frac{3k}{5}}F_{\frac{k}{5}+1-i}^{(i)}}\\
&= (1-2^{-i-2})^{-\frac{3k}{5}}\\
&= ((1-2^{-i-2})^{k})^{-\frac{3}{5}}\\
&\geq e^{\frac{3}{40}},
\end{split}
\]
where we used Lemma \ref{lem:Fai} for the first inequality and Lemma \ref{lem:k} for the second.

We can show similarly that all $q^t$ for $t$ being in $B$ are also equal up to a factor of $e^{\frac{3}{40}}$. 
Hence, the probability to reach $\maxState$ from $i^{*}$ with $t$ time remaining for $t-1\in A$ is nearly uniformly distributed over $A$ (up to a factor of $e^{\frac{3}{40}}$). Similar with $t-1$ in $B$.
Therefore we can pick an $\epsilon_1$ (independent of $i$) such that $\sigma_a(m^t,\maxState)=(\maxState,h)$ for all but $\frac{1}{10}$ of the $t$'s in $A$. Similar, we can pick an $\epsilon_2$ (independent of $i$) such that $\sigma_a(m^t,\maxState)=(\maxState,i)$ for all but $\frac{1}{10}$ of the $t$'s in $B$.

By using $\epsilon=\min(\epsilon_1,\epsilon_2)$ both $\frac{9}{10}$ of all $t$ in  $A$ have that $\sigma_a(m^t,\maxState)=(\maxState,h)$ and $\frac{9}{10}$ of all $t$ in   $B$ have that $\sigma_a(m^t,\maxState)=(\maxState,i)$. But this contradicts that we had a one to one map that mapped at least two thirds of all $m^a$ for $a$ in $A$ to some  memory $m^{b}$ for $b$ in $B$ such that $m^a=m^b$ (and at least two thirds of the $b'$s got mapped to).

Hence all $\epsilon$-optimal counter-based strategies uses memory at least $\frac{k}{5}$. The result then follows from $k\geq 2^{i-2}+i$ from Lemma \ref{lem:kexp}.
\end{proof}


\begin{theorem}\label{thm:lower}{\em (Lower bound)} For all sufficiently small $\epsilon>0$ and all $n\geq 5$, there is a FMDP with $n$ states, where all $\epsilon$-optimal counter-based strategies require memory size at least $\Omega(\log \log \epsilon^{-1}+n)$.
\end{theorem}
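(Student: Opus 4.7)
The plan is to derive the theorem by combining Lemma~\ref{lem:lowere} (an $\Omega(\log\log\epsilon^{-1})$ memory bound on a constant-size MDP) with Lemma~\ref{lem:lowern} (an $\Omega(n)$ bound at a fixed $\epsilon_1$) via a case split on which of the two quantities dominates. Because $\Omega(A+B)=\Omega(\max(A,B))$, it is enough to exhibit, for every admissible pair $(\epsilon,n)$, a single FMDP on exactly $n$ states whose counter-based memory lower bound matches the larger of the two quantities up to a constant factor.

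Fix $\epsilon_0 := \epsilon_1$, where $\epsilon_1$ is the constant supplied by Lemma~\ref{lem:lowern}, and restrict attention to $\epsilon \leq \epsilon_0$. In the first regime, $n \leq c\log\log\epsilon^{-1}$ for a sufficiently small constant $c$, I start with the MDP $M$ of Lemma~\ref{lem:lowere} and inflate it to exactly $n$ states by adjoining dummy coin-toss states that are unreachable from $s_0$. The memory-update function of a counter-based strategy depends only on the current memory and not on the current state, so the sequence of memories visited during any play on the inflated MDP coincides with the sequence visited on $M$; consequently every $\epsilon$-optimal counter-based strategy on the inflated game induces, by restriction to the reachable part, an $\epsilon$-optimal counter-based strategy on $M$ with exactly the same set of memory values. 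Lemma~\ref{lem:lowere} therefore lifts verbatim and yields memory size at least $\log\log\epsilon^{-1}-O(1)=\Omega(\max(n,\log\log\epsilon^{-1}))$. The finitely many boundary cases where $n$ is smaller than the state count of $M$ (for instance $n\in\{5,6\}$) are handled by directly shrinking $M$ (e.g.\ by merging the two absorbing sinks or dropping a redundant coin-toss stage) and verifying by hand that the argument of Lemma~\ref{lem:lowere} still applies.

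In the second regime, $n \geq c\log\log\epsilon^{-1}$, and for $n$ above an absolute constant one has $i := \lfloor (n-4)/2\rfloor \geq 12$. I use the MDP $H(i)$ from Lemma~\ref{lem:lowern}, padded to exactly $n$ states with dummy unreachable coin-toss states if $2i+4<n$. Since any $\epsilon$-optimal strategy with $\epsilon\leq\epsilon_0=\epsilon_1$ is in particular $\epsilon_1$-optimal, Lemma~\ref{lem:lowern} yields a lower bound of $i-5=\Omega(n)=\Omega(\max(n,\log\log\epsilon^{-1}))$ on the memory of any $\epsilon$-optimal counter-based strategy.

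The main obstacle I expect is bookkeeping rather than new mathematics: one must choose the crossover constant $c$ so that the two regimes jointly cover every $n\geq 5$ and every $\epsilon\leq\epsilon_0$, and justify that padding with unreachable states is legitimate. The latter follows cleanly from the defining property of counter-based strategies, since the memory update is independent of the current state and the start state already lies in the reachable part, so unreachable states contribute nothing to either the set of plays or the set of visited memories. The former reduces to handling a finite list of small boundary cases, whose contribution is absorbed into the hidden constant of the $\Omega$-notation.
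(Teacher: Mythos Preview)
Your proposal is correct and follows essentially the same approach as the paper: the paper's own proof is a single sentence stating that the result is ``a simple combination of the two lower bounds in Lemma~\ref{lem:lowere} and Lemma~\ref{lem:lowern}.'' You have filled in the natural details of that combination---the case split on $\max(n,\log\log\epsilon^{-1})$, the padding by unreachable states (which is sound precisely because counter-based memory updates ignore the current state), and the handling of small boundary values of $n$---none of which the paper spells out.
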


\begin{proof}
The proof is a simple combination of the two lower bounds in Lemma \ref{lem:lowere} and Lemma \ref{lem:lowern}. 
\end{proof}




\section{A lower bound on the period of optimal strategies in MDPs}\label{sec:period}

\begin{figure}
\centering
\begin{tikzpicture}[scale=0.3,->,>=stealth',shorten >=1pt,auto,node distance=4.2cm*0.5,
                    semithick]
      
\tikzstyle{every state}=[fill=white,draw=black,text=black]
\tikzstyle{max}=[state,regular polygon,regular polygon sides=3]
\tikzstyle{min}=[max,regular polygon rotate=180]

    \node[state,accepting] (goal) {$\perp$};
    \node[state] (m1) [below of=goal] {1*};
    \node[state] (m2) [below of=m1] {2*};
    \node[state] (m3) [below of=m2] {3*};
    \node[state] (m4) [below of=m3] {4*};
    
    \node[state] (q5) [below of=m4] {5};
    \node[state] (q4) [right of=q5] {4};
    \node[state] (q3) [right of=q4] {3};
    \node[state] (q2) [right of=q3] {2};
    \node[state] (q1) [right of=q2] {1};
    \node[max](max)[below of=q2]{};

\path 
(m1) edge [bend left] (goal)
(m1) edge [bend right] (goal)
(m2) edge [bend left] (m1)
(m2) edge [bend right] (m1)
(m3) edge [bend left] (m2)
(m3) edge [bend right] (m2)
(m4) edge [bend left] (m3)
(m4) edge [bend right] (m3)
(q5) edge (m4)
(q5) edge (q4)
(q4) edge [in=-25,out=90](m3)
(q4) edge (q3)
(q3) edge [in=-25,out=90] (m2)
(q3) edge (q2)
(q2) edge [in=-25,out=90] (m1)
(q2) edge (q1)
(q1) edge [in=-25,out=90] (goal)
(q1) edge [bend left] (q5)
(max) edge (q1)
(max) edge (q2)
;

\end{tikzpicture}
\caption{The MDP $G_5$. Circle vertices are the coin toss states. The triangle vertex is the Max state. The vertex $\perp$ is the terminal state. }
\label{fig:g5}
\end{figure}
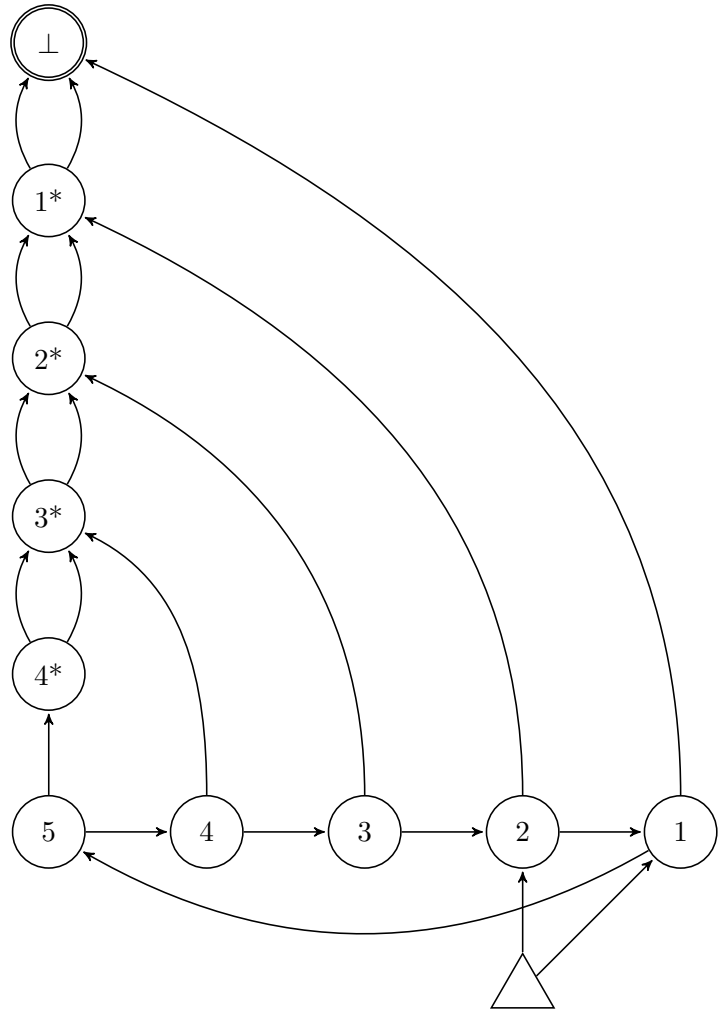

We will in this section show that there exist FMDPs $G$, with $n$ states, such that all optimal strategies can be implemented using a counter-based strategy, and the period is greater than $2^{\Omega(\sqrt{n\log n})}$. We will create such FMDPs in two steps. First we will construct a family, such that the $i'$th member requires that one state uses one action every $\Theta(i)$ steps and in all other steps uses the other action. There is an illustration of a member of that family in Figure \ref{fig:g5}. Afterwards we will play many such games in parallel, which will ensure that a large period is needed for all optimal strategies. There is an illustration of such a game in Figure \ref{fig:f2}.

Let $G_p$, $p\in \{2,3,\dots\}$ be the following FMDP, with $2p-1$ coin toss states and one max state. The coin toss states are divided into the sets $\{1^*,2^*,\dots,(p-1)^*\}$ and $\{1,2,\dots,p\}$. To simplify the following description let state $0^*$ denote the $\perp$ terminal state. A description of $G$ is then \begin{itemize}
\item{}State $i^*$ has state $(i-1)^*$ as both its successors. 
\item{} State $i$ has state $(i-1)^*$ and $(i-1)$ as successors, except state $1$ which has $\perp$ and state $p$ as successors. 
\item{} The max state has $1$ and $2$ as successors.
\end{itemize}
There is an illustration of $G_5$ in Figure \ref{fig:g5}.

\begin{lemma}\label{value of gp}
Let $p\geq 2$ be given. State $i$ has value $1-2^{-f_i(k)}$ in $G_p^{k}$ for $k>0$, where $f_i(k)$ is the function $f_i(k)=\max_{k'\leq k\wedge k'\modS p=i}(k',0)$. 
\end{lemma}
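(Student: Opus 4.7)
My plan is to prove the formula by induction on $k$, using the Bellman recursion. I first handle the starred states: since both successors of $j^*$ equal $(j-1)^*$, the chain $j^*\to(j-1)^*\to\cdots\to 0^*=\perp$ is deterministic, so state $j^*$ has value $1$ under horizon $t$ iff $t\geq j$. Substituting this simplifies the Bellman recursion for the chain states $1,\ldots,p$: for $k\geq 1$,
\[
v_i(k) \;=\; \tfrac12\,[k\geq i] \;+\; \tfrac12\,v_{i-1}(k-1) \quad (i\geq 2), \qquad v_1(k) \;=\; \tfrac12 \;+\; \tfrac12\,v_p(k-1),
\]
where $v_i$ denotes the value of state $i$ and $[\,\cdot\,]$ is the Iverson bracket. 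The base case $v_i(0)=0=1-2^{-f_i(0)}$ for every chain state is immediate.

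For the inductive step I would substitute the inductive hypothesis $v_{i-1}(k-1)=1-2^{-f_{i-1}(k-1)}$ (resp.\ $v_p(k-1)=1-2^{-f_p(k-1)}$) into the recursion and reduce the goal $v_i(k)=1-2^{-f_i(k)}$ to a purely arithmetic identity about $f$. The identity splits by cases: when a witness for $f_i(k)$ exists (i.e.\ $k\geq i$) one needs $f_i(k)=f_{i-1}(k-1)+1$, and when no witness exists both sides collapse to $0$ via $f_i(k)=f_{i-1}(k-1)=0$. The positive case follows from the bijection $k'\mapsto k'-1$ between the residue class of $i$ and that of $i-1$ modulo $p$, combined with $k'\leq k \Leftrightarrow k'-1\leq k-1$. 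The $i=1$ case uses the analogous identity $f_1(k)=f_p(k-1)+1$ under the convention (implicit in the statement) that the residue $p$ is represented by the positive multiples of $p$, so that $f_p(t)=0$ precisely when $t<p$; the degenerate case $k=1$ with $f_1(1)=1$ and $f_p(0)=0$ fits.

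The main obstacle is purely the careful bookkeeping of the boundary cases $k\approx i$ and of the wrap-around at $1\to p$; no deeper idea is required once the Bellman recursion is written down. As a sanity check the formula has a transparent combinatorial reading: starting from state $i$, the play generates fair coin flips along the cycle $i\to i-1\to\cdots\to 1\to p\to\cdots$, each offering a chance to ``escape'' via some $j^*$; a commitment in round $r$ reaches $\perp$ in exactly $i+(r-1)p$ steps, so failure within horizon $k$ requires all of the first $f_i(k)$ coin flips to stay in the chain, an event of probability $2^{-f_i(k)}$.
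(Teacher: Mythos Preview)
Your proposal is correct and follows essentially the same approach as the paper: induction on the horizon $k$, using that the starred chain $j^*\to(j-1)^*\to\cdots\to\perp$ is deterministic to reduce the Bellman update for the ring states to a simple arithmetic recursion. The only cosmetic difference is organization: the paper splits into the ranges $1<k\le p$ and $k>p$ and argues that exactly one ring state (the one with $k\bmod p=i$) changes value at each step, whereas you verify the identity $f_i(k)=f_{i-1}(k-1)+1$ (and its wrap-around analogue) uniformly for all states; your combinatorial sanity check is an additional, independent confirmation not present in the paper.
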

\begin{proof}
It is easily seen by induction that $i^*$ has value 1 in $G_p^i$.
Note that $f_i(k)=i$ for $k\modS p=i$.
The proof will be by induction in $k$. There will be one base case and two induction cases, one for $1< k\leq p$ and one for $k>p$. 
It is easy to see that state $1$ has value $\frac{1}{2}=1-\frac{1}{2}=1-2^{-f_1(1)}$ in $G_p^1$ and state $j$ for $j\neq 1$ has value 0. That settles the base case.

For $1< k\leq p$. Neither of the successors of state $j$, for $j\neq k$, has changed values from $G_p^{k-2}$ to $G_p^{k-1}$. For state $k$, both its successors has changed value. The value of state $k-1^*$ has become $\val(G_p^{k-1})_{k-1^*}=1$ and the value of state $k-1$ has become $\val(G_p^{k-1})_{k-1}=1-2^{-f_{k-1}(k-1)}$. The value of state $k$ is then \[\val(G_p^{k})_{k}=\frac{1+1-2^{-f_{k-1}(k-1)}}{2}=\frac{1+1-2^{-(k-1)}}{2}=1-2^{-(k-1)-1} =1-2^{-f_k(k)}.\]

For $p<k$. Let $i$ be $k\modS_p$. Neither of the successors of state $j$, for $j\neq i$, has changed values from $G_p^{k-2}$ to $G_p^{k-1}$. The value of state $i'=i-1\modS_p$, in iteration $k-1$ is $\val(G_p^{k-1})_{i'}=1-2^{-f_{i'}(k-1)}$. The value of state $i$ is then \[\val(G_p^{k})_{i}=\frac{1+1-2^{-f_{i'}(k-1)}}{2}=\frac{1+1-2^{-(k-1)}}{2}=1-2^{-(k-1)-1} =1-2^{-f_i(k)}.\]
The desired result follows.
\end{proof}

The idea behind the construction of $F_k$ is that to find the state of the largest value among 1 and 2, in $G^T_p$, for $p\geq 2$ and $T\geq 1$, we need to know if $T\modS p=1$ or not.
Let $p_i$ be the $i$'th smallest prime number. The FMDP $F_k$ is as follows:
$F_k$ consists of a copy of $G_{p_i}$ for $i\in \{1,\dots,k\}$. Let the max state in that copy of $G_{p_i}$ be $m_i$. 
There is a illustration of $F_2$ in Figure \ref{fig:f2}.

\begin{figure}
\centering
\begin{tikzpicture}[scale=0.3,->,>=stealth',shorten >=1pt,auto,node distance=4.2cm*0.5,
                    semithick]
      
\tikzstyle{every state}=[fill=white,draw=black,text=black]
\tikzstyle{max}=[state,regular polygon,regular polygon sides=3, inner sep=-0.05cm]
\tikzstyle{min}=[max,regular polygon rotate=180]

    \node[state,accepting] (goal) {$\perp$};
    \node[state] (m1n2) [below right of=goal] {};

    \node[state] (q2n2) [below of=m1n2] {};
    \node[state] (q1n2) [right of=q2n2] {};
    \node[max](maxn2)[below of=q2n2]{$m_1$};

    \node[state] (m1n3) [below left of=goal] {};
    \node[state] (m2n3) [below of=m1n3] {};

    \node[state] (q3n3) [below of=m2n3] {};
    \node[state] (q2n3) [left of=q3n3] {};
    \node[state] (q1n3) [left of=q2n3] {};
    \node[max](maxn3)[below of=q2n3]{$m_2$};

\path 
(m1n2) edge [bend left] (goal)
(m1n2) edge [bend right] (goal)
(q2n2) edge (m1n2)
(q2n2) edge (q1n2)
(q1n2) edge [in=0,out=90]  (goal)
(q1n2) edge [bend left] (q2n2)
(maxn2) edge (q1n2)
(maxn2) edge (q2n2)

(m1n3) edge [bend left] (goal)
(m1n3) edge [bend right] (goal)
(m2n3) edge [bend left] (m1n3)
(m2n3) edge [bend right] (m1n3)
(q3n3) edge (m2n3)
(q3n3) edge (q2n3)
(q2n3) edge [in=180,out=90]  (m1n3)
(q2n3) edge (q1n3)
(q1n3) edge [in=180,out=90]  (goal)
(q1n3) edge [bend right] (q3n3)
(maxn3) edge (q1n3)
(maxn3) edge (q2n3)
;

\end{tikzpicture}
\caption{The FMDP $F_2$. Circle vertices are the coin toss states. Triangle vertices are the max states. The vertex $\perp$ is the terminal state. }
\label{fig:f2}
\end{figure}
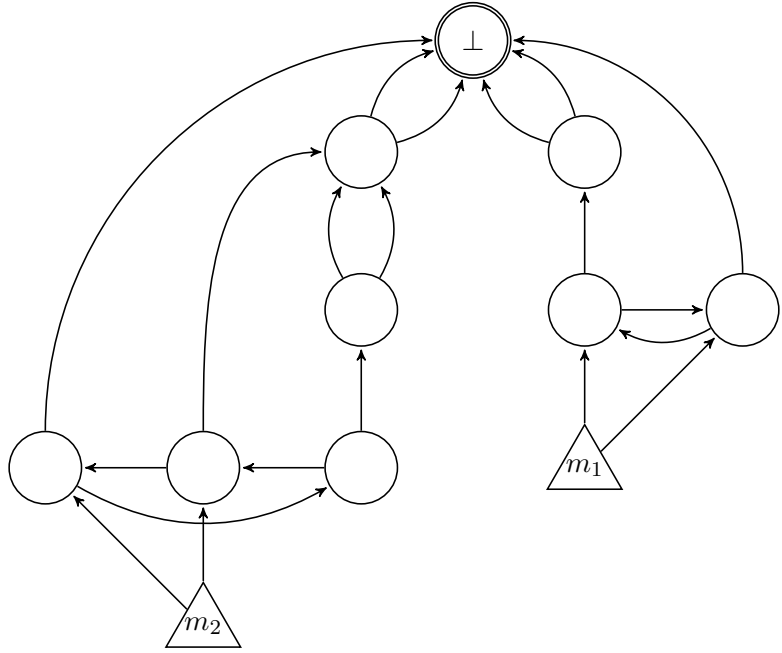

We will now show that all optimal strategies  for $F_k$ are subsets of counter-based strategies with a period defined by $k$. Afterwards we will show that the number of states in $F_k$ can also be expressed in terms of $k$. At the end we will use those two lemmas to get to our result.

\begin{lemma}\label{lem:bits needed for Fk}
Any optimal strategy $\sigma(k,T')$ in $F_k$ is an finite memory counter-based strategies with period $P=\prod_{i\in \{1,\dots,k\}} p_i$, where $p_i$ is the $i$'th smallest prime number.
\end{lemma}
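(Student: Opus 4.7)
The plan is to analyze optimal play at each max state $m_i$ of $F_k$ via Lemma~\ref{value of gp}, show that the optimal action at $m_i$ is uniquely determined by the remaining horizon modulo $p_i$, and then combine these observations across all $k$ disjoint components using that the $p_i$'s are distinct primes.

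First, I would fix $i \in \{1,\ldots,k\}$ and consider any reachable configuration in which the pebble sits at $m_i$ with $t$ remaining moves. By Lemma~\ref{value of gp} applied to the component $G_{p_i}$, after the move from $m_i$ the two successors of $m_i$, namely states $1$ and $2$ of that component, will have values $1 - 2^{-f_1(t-1)}$ and $1 - 2^{-f_2(t-1)}$ respectively, where $f_j(\cdot)$ is the function defined in that lemma. Since $1 - 2^{-x}$ is strictly increasing in $x$, Player~1's optimal move picks the successor with the larger $f$-value. A short case analysis on $r = (t-1) \bmod p_i$ yields: when $r = 1$ we get $f_1(t-1) = t-1$ and $f_2(t-1) = t - p_i$, so successor~$1$ is strictly better by $p_i - 1$; in every other residue class with $t-1 \geq p_i$ we get $f_2(t-1) = f_1(t-1) + 1$, so successor~$2$ is strictly better by~$1$. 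Hence the optimal action at $m_i$ is unique and depends only on $t \bmod p_i$.

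Second, since the components $G_{p_i}$ are disjoint in $F_k$ and the optimal action at each $m_i$ is a deterministic function of the elapsed time alone (equivalently of $T' - h$ where $h$ is the number of moves taken so far), any optimal strategy $\sigma(k,T')$ must agree with this function at every $m_i$ and is therefore implementable as a counter-based strategy. The action at $m_i$ is periodic in the counter with minimal period exactly $p_i$. Aggregating across $i$, the joint strategy has period dividing $\operatorname{lcm}(p_1,\ldots,p_k) = \prod_{i=1}^k p_i = P$, using that the $p_i$'s are pairwise coprime as distinct primes. Conversely, any period $P'$ of the joint strategy must also be a period of the restriction to each $m_i$, which forces $p_i \mid P'$ for every $i$, and hence $P' \geq P$. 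This pins the period at exactly $P$.

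The main obstacle is the strictness of the value comparison in step one: I must verify that the two successor values at $m_i$ differ in every residue class so that the optimal action is forced (ruling out freedom in tie-breaking at $m_i$) and the strategy at $m_i$ genuinely has minimal period $p_i$. Some care is needed for small $t$ where one of the $f_j(t-1)$ may equal $0$ and the identities above degenerate; however these cases occur only in a bounded initial segment of the counter-based strategy and do not affect the eventual periodic phase whose period is what the lemma concerns.
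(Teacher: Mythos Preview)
Your proposal is correct and follows essentially the same approach as the paper's proof: both use Lemma~\ref{value of gp} to show that the unique optimal action at each $m_i$ depends only on the remaining horizon modulo $p_i$, and then combine across the $k$ components (the paper invokes the Chinese remainder theorem, you use the equivalent lcm/divisibility argument) to pin the period at exactly $P=\prod_i p_i$. Your version is somewhat more explicit in the case analysis comparing $f_1$ and $f_2$ and in acknowledging the bounded initial segment before the periodic phase, but the underlying idea is the same.
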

\begin{proof}
Let $i$ be some number in $\{1,\dots,k\}$. The lone optimal choice for $m_i$ and $T'>0$ is to use the action that goes to state $1$ in $G_{p_i}$ if $T\modS p_i=1$ and otherwise to use the action that goes to state $2$ in $G_{p_i}$ by Lemma \ref{value of gp}. Hence, by the Chinese remainder theorem there are precisely $P$ steps between each time any optimal strategy uses the action that goes to $1$ in all $m_i$'s. That is, any optimal strategy must do the same action at least every $P$ steps. Furthermore it is also easy to see that any optimal strategy must do the same at most every $P$ steps, by noting that $T+P\modS p_i$ is 1 if and only if $T\modS p_i$ is 1 and again applying Lemma \ref{value of gp}. A strategy that does the same every $P$ steps can be expressed by a counter-based strategy with period $P$, which also uses memory at most $P$.
\end{proof}

\begin{lemma}\label{lem:states in Fk}
The number of states in $F_k$ is $2 \sum_{i\in \{1,\dots,k\}} p_i$.
\end{lemma}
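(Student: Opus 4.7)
The plan is a direct counting argument based on the construction of $G_p$ and $F_k$.

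First I would recall the anatomy of a single copy $G_p$. By the definition, its coin-toss states split into the starred block $\{1^*,2^*,\dots,(p-1)^*\}$, which has $p-1$ elements, and the unstarred block $\{1,2,\dots,p\}$, which has $p$ elements. On top of that there is exactly one max state. The state $0^*$ was only a notational stand-in for the terminal $\perp$ and is not a state in the formal sense used by the counting (recall from Section~\ref{sec:def} that the state space is $S=S_1\cup S_2\cup S_R$ and does not include $\perp$). So the number of (non-terminal) states of $G_p$ is
\[(p-1)+p+1 \;=\; 2p.\]

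Next I would observe that $F_k$ is built by gluing the $k$ copies $G_{p_1},G_{p_2},\dots,G_{p_k}$ only through the shared terminal state $\perp$; the non-terminal state spaces of the copies are disjoint by construction. Hence the total number of (non-terminal) states of $F_k$ equals the sum of the state counts of the individual $G_{p_i}$:
\[\sum_{i=1}^{k} 2p_i \;=\; 2\sum_{i=1}^{k} p_i,\]
which is exactly the claimed bound.

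There is no real obstacle here; the lemma is a bookkeeping statement, and the only thing to be careful about is that $\perp$ is not double-counted across copies and is not itself a member of $S$. Both conventions are already fixed in Section~\ref{sec:def}, so the count is unambiguous.
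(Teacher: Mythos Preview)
Your proof is correct and follows exactly the same approach as the paper's own proof, which simply states that each $G_{p_i}$ has $2p_i$ states and then sums. Your version is just more explicit about the $(p-1)+p+1$ breakdown and about why $\perp$ is not counted, but the argument is identical.
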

\begin{proof}
For any $i$, $G_{p_i}$ consists of $2p_i$ states. $F_k$ therefore consists of $2 \sum_{i\in \{1,\dots,k\}} p_i$ states.
\end{proof}

\begin{theorem}\label{thm:period}
There are FMDPs $G$, with $n$ states, where all optimal strategies are finite memory counter-based strategies with period  $2^{\Omega(\sqrt{n\log n})}$.
\end{theorem}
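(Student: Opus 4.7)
The plan is to invoke the construction $F_k$ and combine Lemma \ref{lem:bits needed for Fk} and Lemma \ref{lem:states in Fk} with standard asymptotic estimates on primes to express the period as a function of the number of states. By Lemma \ref{lem:bits needed for Fk}, every optimal strategy for $F_k$ has period exactly $P_k = \prod_{i=1}^{k} p_i$, the $k$-th primorial. By Lemma \ref{lem:states in Fk}, the state count is $n_k = 2\sum_{i=1}^{k} p_i$. Given any target $n$, I would pick the largest $k$ with $n_k \leq n$ and, if necessary, pad with a few isolated dummy states (or just note the result for the sequence $n = n_k$); the period stays $P_k$.

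The heart of the argument is an elementary asymptotic computation using the prime number theorem. Using the standard bound $p_k = \Theta(k \log k)$, I would first derive
\[ n_k = 2\sum_{i=1}^k p_i = \Theta(k^2 \log k). \]
For the period, I would use Chebyshev's bound $\theta(x) = \sum_{p\leq x} \log p = \Theta(x)$, which gives
\[ \log_2 P_k = \sum_{i=1}^{k} \log_2 p_i = \Theta(\theta(p_k)) = \Theta(p_k) = \Theta(k \log k). \]

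Finally, I would invert the relation $n_k = \Theta(k^2 \log k)$ to express $k$ in terms of $n$. Taking logarithms gives $\log n_k = \Theta(\log k)$, and hence $k = \Theta(\sqrt{n_k/\log n_k})$ while $\log k = \Theta(\log n_k)$. Substituting back,
\[ \log_2 P_k = \Theta(k \log k) = \Theta\!\left(\sqrt{n_k/\log n_k}\cdot \log n_k\right) = \Theta(\sqrt{n_k \log n_k}), \]
yielding $P_k = 2^{\Omega(\sqrt{n \log n})}$ as required. The main obstacle is just bookkeeping the asymptotics cleanly: one must be careful that the $\Theta$-estimates on the primorial are stated via $\theta(p_k)$ rather than by naively bounding each $\log p_i$ by $\log p_k$, otherwise one loses a $\log k$ factor and obtains only a $\sqrt{n}$ bound instead of $\sqrt{n \log n}$.
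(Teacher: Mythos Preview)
Your proposal is correct and follows essentially the same route as the paper: invoke Lemmas~\ref{lem:bits needed for Fk} and~\ref{lem:states in Fk}, estimate $n_k = \Theta(k^2\log k)$ and $\log P_k = \Omega(k\log k)$ via prime asymptotics, and invert to obtain $\log P_k = \Omega(\sqrt{n\log n})$. The one cosmetic difference is that where you invoke Chebyshev's estimate $\theta(p_k)=\Theta(p_k)$ for the primorial, the paper simply uses the elementary bound $\prod_{i\le k} p_i \ge k! = 2^{\Omega(k\log k)}$, so the step you flag as requiring care is handled there without any analytic number theory.
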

\begin{proof}
Let $n$ be such that there exists a game $F_k$ with $n$ states. Note that for any number there is always a larger number, $a$, such that $F_b$ has $a$ states for some $b$.

By Lemma \ref{lem:states in Fk}, we have that $n=2\sum_{i\in \{1,\dots,k\}} p_i$. By the prime number theorem (see e.g. Newman~\cite{Newman}) we have that $\sum_{i\in \{1,\dots,k\}} p_i= \sum_{i\in \{1,\dots,k\}} o(k\log k) = o(k^2\log k)$.  

Let $f(x)=x^2\log x$ for $x> 1$. The function $f(x)$ is strictly monotone increasing and hence, has an inverse function. Let that function be $f^{-1}(y)$.  We have that $f^{-1}(y)\geq  \sqrt{\frac{y}{\log y}}$, for $y\geq 2$, because  \begin{eqnarray*}
f^{-1}(y)\geq \sqrt{\frac{y}{\log y}} & \Leftarrow & f(f^{-1}(y)) \geq f(\sqrt{\frac{y}{\log y}})\\
& \Leftarrow & y \geq (\sqrt{\frac{y}{\log y}})^2\log (\sqrt{\frac{y}{\log y}})\\
& \Leftarrow & y \geq \frac{y}{\log y} \log (\sqrt{\frac{y}{\log y}})\\
& \Leftarrow & y \geq \frac{y}{\log y} \log y\\
& \Leftarrow & y \geq y\\
\end{eqnarray*}

Here, the first $\Leftarrow$ follows by taking $f^{-1}$ on both sides. The function $f^{-1}$ is strictly monotone increasing, because $f(x)$ was. The fourth $\Leftarrow$ follows from $y\geq \sqrt{\frac{y}{\log y}}$ for $y\geq 2$ and $\log$ being monotone increasing.

Therefore, let $g(k)=2\sum_{i\in \{1,\dots,k\}} p_i$, then $g^{-1}(n)=\Omega(\sqrt{\frac{n}{\log n}})$.
By Lemma \ref{lem:bits needed for Fk}, we have that the period is $\prod_{i\in \{1,\dots,k\}} p_i$. Trivially we have that
\[
\prod_{i\in \{1,\dots,k\}} p_i\geq \prod_{i\in \{1,\dots,k\}} i =k!= 2^{\Omega ( k \log k)}
\]

We now insert $\Omega(\sqrt{\frac{n}{\log n}})$ in place of $k$ and get
\[
\prod_{i\in \{1,\dots,k\}} p_i= 2^{\Omega(\Omega(\sqrt{\frac{n}{\log n}}) \log (\Omega(\sqrt{\frac{n}{\log n}})))}=2^{\Omega(\sqrt{\frac{n}{\log n}} (\log n - \log \log n))}=2^{\Omega(\sqrt{n\log n})}
\]
The result follows.
\end{proof}

\section{Conclusion}

In the present paper we have considered properties of finite-horizon Markov decision processes and simple stochastic games. The $\epsilon$-optimal strategies considered in Section \ref{sec:counter} indicates the hardness of playing such games with a short horizon. The concept of period from Section \ref{sec:period} indicates the hardness of playing such games with a long horizon. Along with our lower bound from Section~\ref{sec:period} we conjecture the following: 
\begin{conjecture}
All FSSGs have an optimal strategy, which is an finite memory counter-based strategy, with period at most $2^n$. 
\end{conjecture}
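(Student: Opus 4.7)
The plan is to analyze the orbit of the Bellman operator $\mathcal{F}:[0,1]^S\to [0,1]^S$ that implements one step of value iteration, sending a value vector to the vector whose $s$-th component is $\max_{a\in A_s} v_{d(a)}$ for $s\in S_1$, $\min_{a\in A_s} v_{d(a)}$ for $s\in S_2$, and $\tfrac{1}{2}\sum_{a\in A_s} v_{d(a)}$ for $s\in S_R$. Writing $v^k=\mathcal{F}^k v^0$ for the horizon-$k$ value vector, an optimal Markov strategy at time $t$ of $G^T$ selects in each $s\in S_1\cup S_2$ an action achieving the max or min over successor values in $v^{T-t-1}$. Thus it suffices to show that the sequence of ``action patterns'' $(\pi_k)_{k\geq 0}$ induced by $(v^k)$ is eventually periodic with period at most $2^n$. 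Since every state has exactly two outgoing arcs, there are only $2^{|S_1|+|S_2|}\leq 2^n$ distinct patterns, which is precisely the reason to expect a $2^n$ bound.

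For the easy case, suppose the infinite-horizon value vector $v^\infty$ (well-defined by Condon~\cite{Condon92}) lies strictly in the interior of a unique pattern region $R_{\pi^\ast}$, meaning no two successor values at any decision state are tied in the limit. Since $v^k\to v^\infty$ exponentially fast, for all sufficiently large $k$ the iterate $v^k$ lies in $R_{\pi^\ast}$, so $\pi_k=\pi^\ast$ eventually and the period is $1$. The substantive case is when $v^\infty$ lies on the boundary between several regions, which happens exactly when some decision state has multiple actions tied for optimal in the limit; this is the situation the lower bound of Section~\ref{sec:period} exploits. Here my approach would be to analyze the residual $w^k=v^\infty-v^k$, which satisfies its own piecewise-linear recurrence: at a max state with tied limit actions, $w^{k+1}_s=\min_{a\in T^\infty_s} w^k_{d(a)}$ (where $T^\infty_s$ is the tied action set), with min and max swapped at min states, and averaging preserved at random states. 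This dual recurrence should be analyzable recursively, stratifying the state space by how many nested ``tie-breaking'' levels are needed.

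The main obstacle is bounding the period of the pattern evolution on the boundary. On each linear region $R_\pi$ the operator $\mathcal{F}$ acts as a fixed affine map $M_\pi$, but the next region visited depends on $v^k$ rather than on $\pi_k$ alone, so the pattern dynamics is not directly a finite automaton on $2^n$ states. My best hope would be to combine the recursive tie-breaking stratification (which on each level acts on a quotient game with strictly fewer ``active'' states) with a combinatorial argument showing that each action pattern $\pi$ can appear at most once per period, yielding the $2^n$ bound. The prime-based lower bound of $2^{\Omega(\sqrt{n\log n})}$ from Theorem~\ref{thm:period} shows that periods can genuinely be super-polynomial, so the proof must allow long transients driven by coprime oscillations at independent tied substructures; a naive dimension count or spectral analysis of the matrices $\{M_\pi\}$ would only give a much weaker bound, and reconciling the conjectured exponential upper bound with the sub-exponential lower bound via this recursive structure is precisely where I expect to get stuck.
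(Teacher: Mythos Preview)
The statement you are attempting to prove is a \emph{conjecture}: the paper states it in the conclusion and offers no proof. There is therefore no ``paper's own proof'' to compare your proposal against; the authors explicitly leave this open.

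Your write-up is not a proof but a sketch of an approach, and you yourself identify where it breaks down. The key step you would need---that each action pattern $\pi$ appears at most once per period of the sequence $(\pi_k)_k$---is not established and is not obviously true. The same pattern $\pi$ can recur at two times $k<k'$ with $v^k\neq v^{k'}$, because $\pi_k$ depends only on the \emph{orderings} of successor values at decision states, not on the values themselves; from those two occurrences the subsequent evolution under $\mathcal{F}$ need not agree, so $\pi_{k+1}\neq\pi_{k'+1}$ is possible. Thus the number of distinct patterns does not directly bound the period of the pattern sequence, and the $2^n$ count by itself does not close the argument. Your recursive tie-breaking stratification is a plausible line of attack, but as you acknowledge, reconciling it with the sub-exponential lower bound of Theorem~\ref{thm:period} (which already shows periods can be $2^{\Omega(\sqrt{n\log n})}$ via coprime oscillations on independent substructures) is exactly the unresolved difficulty. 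In short: your proposal correctly isolates the obstacle but does not overcome it, which is consistent with the statement's status as an open conjecture.
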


\end{document}